\title{Counting Triangulations of Fixed Cardinal Degrees}
\titlerunning{Counting Triangulations of Fixed Cardinal Degrees}
\author{Erin Chambers}
    {Department of Computer Science and Engineering, University of Notre Dame, United States}
    {echambe2@nd.edu}
    {https://orcid.org/0000-0001-8333-3676}
    {Research supported in part by the National Science Foundation under award CCF-2444309.}
\author{Tim Ophelders}
    {Department of Information and Computing Sciences, Utrecht University, the Netherlands \and
     Department of Mathematics and Computer Science, TU Eindhoven, the Netherlands}
    {t.a.e.ophelders@uu.nl}
    {https://orcid.org/0000-0002-9570-024X}
    {Partially supported by the Dutch Research Council (NWO) under project no. VI.Veni.212.260.}
\author{Anna Schenfisch}
    {Department of Mathematics, KTH Royal Institute of Technology, Stockholm Sweden}
    {schenf@kth.se}
    {https://orcid.org//0000-0003-2546-5333}
    {Supported by the Dutch Research Council (NWO) under project no. P21-13.}
\author{Julia Sollberger}
    {Department of Mathematics, Vrije Universiteit Amsterdam, the Netherlands}
    {j.sollberger@vu.nl}
    {}
    {}
\authorrunning{E. Chambers, T. Ophelders, A. Schenfisch, and J. Sollberger}
\keywords{Planar Triangulations, Degree Information, \#P-Hardness}
\newcommand{\enumit}[1]{\textcolor{darkgray}{\sffamily\bfseries\upshape\mathversion{bold}#1}}
\newcommand{\sharpVC}{\textsf{\#3-regular bipartite planar vertex cover}\xspace}
\newcommand{\sharpIS}{\textsf{\#3-regular bipartite planar independent set}\xspace}
\newcommand{\sharpTiles}{\textsf{\#tiled noncrossing cycle-set}\xspace}
\newcommand{\sharpCardSig}{\textsf{\#cardinal signature realization}\xspace}
\begin{document}

\maketitle
\begin{abstract}
    A fixed set of vertices in the plane may have multiple planar straight-line triangulations in which the degree of each vertex is the same.
    As such, the degree information does not completely determine the triangulation.
    We show that even if we know, for each vertex, the number of neighbors in each of the four cardinal directions, the triangulation is not completely determined.
    In fact, we show that counting such triangulations is~\#P-hard via a reduction from \sharpVC.
\end{abstract}

\section{Introduction}
    Suppose we are given a finite set of vertices in $\R^2$ as well as the degrees of each vertex in each of the four cardinal directions. How
    many maximal triangulations (i.e., triangulations of the convex hull of the
    vertex set) satisfy these constraints?

    Of course, enumerating and counting planar triangulations and other non-crossing structures is well studied. 
    One significant result of particular relevance for this paper proves that counting the number of triangulations of a polygon with holes and integer-valued coordinates is \#P-hard~\cite{Eppstein2019,Eppstein2020}.
    Other work in this area focuses on bounding or approximating the number of triangulations, both in general and for  special classes of points~\cite{Dey1993,Epstein1994,Kaibel2003,Anclin2003,Aichholzer2004,Aichholzer2007,Aichholzer2016}, or even counting or enumerating all such triangulations in exponential time~\cite{Parvez2011,Alvarez2013,Alvarez2015}.
    Our formulation differs in that we are giving some partial information at each vertex, and asking what triangulations exist that respect this partial information, which is perhaps more akin in spirit to the partially embedded graph problem, which asks how best to extend a straight-line planar drawing of a subgraph to a planar drawing of the whole graph~\cite{Chan2014}.
    
    While quite combinatorial in nature and interesting on its own as a restricted variant of the standard counting problem, our interest in this problem comes from a very different context, motivated by the study of directional transforms in topological data analysis (TDA). Directional transforms are parameterized sets of summaries of a shape (e.g., persistence diagrams formed from sublevel set filtrations of a graph parameterized by direction), and have recently garnered significant interest in the TDA community~\cite{Turner2014,Curry2022,belton2020reconstructing,Turner2024,fasy2023faithful,fasy2024smallfaithfulsetsbe,Onus2024,Turkev2022,Chambers2025}.  In this line of work, it is well known that the Euler characteristic function or persistence diagram of a shape from height-based filtrations in many different directions allows complete reconstruction of the object, although unfortunately the best known bounds require an exponential number of directions in total.  However, these transforms taken from a small number of directions can in practice nonetheless help in many shape analysis pipelines, with a growing body of work exploring their potential applications in different data analysis pipelines~\cite{Amezquita2021,Turkev2022,Chambers2024merge,Qin2025}.  
    
    Our motivation for this paper comes from studying the inverse version of this problem, as formalized in a general sense in~\cite{Oudot2020}.  Here, we specify the inverse question as follows: if we are given the persistence diagrams of the sublevel set filtration taken from a small number of directions, how many different input shapes can generate this data? Such a study can give insights into how lossy the data is, allowing progress towards better lower bounds on the number of directions that are truly necessary.
    Indeed, we could also ask for a lower bound on the minimum number of directions required to reconstruct the object, given a particular topological signatures and class of shapes. In an effort to specify a clean combinatorial version of this problem, we note that both Euler characteristic functions and persistence diagrams come in two flavors; \emph{concise} and \emph{verbose}, depending on if they omit or include instantaneous changes to Euler characteristic or homology. In our setting of embedded graphs, the verbose Euler characteristic function or verbose persistence diagram corresponding to a single direction gives the number of edges below and adjacent to each vertex, i.e., they give indegrees with respect to that direction~\cite{belton2020reconstructing}, allowing for a clean combinatorial formulation in terms of the graph only, which we focus on for the rest of this work.

    Work in this domain is quite new, and there is very little work on understanding these lower bounds even for very simple geometric objects.  We note that~\cite{fasy2024smallfaithfulsetsbe} gives a
    construction of plane matchings for which having access to indegree information with respect to $\Omega(n)$
    directions is required to uniquely determine the edge set. With their focus on bounding the number of directions, the authors of~\cite{fasy2024smallfaithfulsetsbe} construct only \emph{pairs} of matchings with the same indegree information for many directions, and is not concerned with the problem of enumerating all such realizations.

    In order to study the computational question of enumerating realizations for
    given indegree information, we focus on graphs that satisfy given indegree
    information in each of the four cardinal directions.
    We quickly see that knowing the degrees in these directions does not
    uniquely determine the edge set. See Figure~\ref{fig:same-card-degs} for an
    example. Moreover, we show that, not only can many such
    realizations exist, but that counting the number of possible realizations
    given just cardinal-degree information is \#P-hard via a parsimonious
    reduction from \sharpVC.
    Because our constructions can be sheared to an arbitrary degree, our work implies hardness of counting realizations when we know degree information in any $d$ directions, with $d \geq 4$.
    Unfortunately, this means that even
    in~$\mathbb{R}^2$, not only is an arbitrarily large set of fixed directions generally insufficient to determine a unique input triangulation, neither is it sufficient to restrict
    or even allow efficient enumeration of the number of input triangulations.

\section{Preliminaries}
    We consider \emph{plane straight-line (PSL) graphs}: graphs whose vertices are points in the plane, and whose edges are interior-disjoint segments connecting the vertices at their endpoints.
    A PSL graph is a \emph{PSL triangulation} if all bounded faces are triangles.
    A PSL graph is \emph{maximal} if adding any edge would result in a graph that is not a PSL graph.

    For a finite set $V$ of points in the plane, let $\hull(V)$ be the cycle graph whose vertices are those of $V$ that lie on the boundary of $V$'s convex hull, and whose edges connect consecutive such vertices.
    Any maximal PSL graph $(V,E)$ is a PSL triangulation and has $\hull(V)$ as a subgraph.
    In this work, we consider only PSL graphs whose vertices have distinct $x$- and~$y$-coordinates, and whose vertices are in general position.
    
    First, we note the number of edges in a maximal PSL graph in relation to how
    many vertices lie on the boundary of its convex hull.
    
    \begin{property}\label{prop:euler}
        By Euler's formula, a PSL graph $G=(V,E)$ with $c$ vertices in $\hull(V)$ is maximal if and only if $|E|=3|V|-c-3$.
    \end{property}

    Since PSL graphs have a fixed embedding in the plane, we may use geometric
    information about edges as invariants. Rather than just the degree of a vertex, we introduce a notion of degree that takes into account where particular edges lie in relationship to the vertex.
    \begin{definition}[Cardinal Degrees]
    \label{def:cardinaldegree}
        Let $G = (V,E)$ be a PSL graph and $v \in V$. 
        The \emph{north-degree of $v$}, denoted $\deg_\dN(v)$, is the number of edges 
        incident to $v$ whose other endpoint lies strictly above $v$ in the $y$-direction 
        (conceptually ``to the north''). 
        The \emph{south-, east-, and west-degrees of $v$} ($\deg_\dS(v)$, $\deg_\dE(v)$, and $\deg_\dW(v)$) are defined analogously.
        The four functions $\deg_\dN,\deg_\dS,\deg_\dE,\deg_\dW\colon V\to\Nat$ are the \emph{cardinal degrees} of $G$.
    \end{definition}
    
    We are interested in reconstructing the edges of a PSL graph from just its vertices and cardinal degrees.
    For a vertex set $V\subset \R^2$ and functions $\deg_\dN,\deg_\dS,\deg_\dE,\deg_\dW\colon V\to\Nat$, we refer to the tuple $\sigma=(V,\deg_\dN,\deg_\dS,\deg_\dE,\deg_\dW)$ as a \emph{cardinal signature}.
    Following Definition~\ref{def:cardinaldegree}, any PSL graph $G$ has a unique cardinal signature, which we denote $\sigma_G$.
    We call $G$ a \emph{realization} of a cardinal signature $\sigma$ if and only if $G$ is a PSL graph with $\sigma_G=\sigma$.
    
    Figure \ref{fig:same-card-degs} (right) shows that a cardinal signature can have multiple realizations.
    In this work, we focus on counting how many realizations a given cardinal signature has.
    
    \begin{figure}[t]
        \centering
        \includegraphics[page=3]{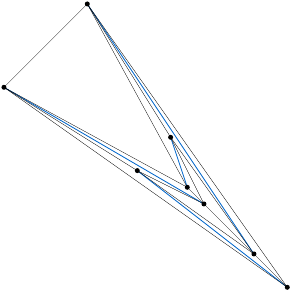}\hspace{1em}%
        \includegraphics[page=4]{same-card-degs}\hspace{4em}%
        \includegraphics[page=1]{same-card-degs}\hspace{-5em}%
        \includegraphics[page=2]{same-card-degs}%
        \caption{%
            (Left) A pair of maximal PSL triangulations with identical vertices and north- and south-degrees. (Right) A pair with identical cardinal signatures.
            Common edges are shown in black.}
        \label{fig:same-card-degs}
    \end{figure}

\section{Saturated Triangulations}
    Although we have seen that a cardinal
    signature does not necessarily have a unique realization, we are sometimes
    able to identify edges that must be present in every realization. 
    \begin{definition}[Forced Edges]
       Let $\sigma$ be a cardinal signature for a vertex set $V$. 
       For $u,v \in V$, we say that the edge~$uv$ is \emph{forced} if every possible realization of $\sigma$ contains the edge $uv$. 
    \end{definition}
    In general, we illustrate forced edges in black
    and edges that are not forced in color. Then black edges taken together
    with all edges of a particular color (or colors) form an example
    realization of some cardinal signature (see~Figure~\ref{fig:same-card-degs}).

    Let $\sigma=(V,\deg_\dN,\deg_\dS,\deg_\dE,\deg_\dW)$ be a cardinal signature.
    The number of edges in any realization of $\sigma$ is the sum of degrees in any direction.
    Specifically, for any $\mathcal{D}\in\{\dN,\dS, \dE,\dW\}$, the number of edges in any realization of $\sigma$ is $\sum_{v\in V}\deg_\mathcal{D}(v)$.
    We denote this number of edges by $m_\sigma$.
    We call the signature $\sigma$ \emph{saturated} if $m_\sigma=3|V|-c-3$, where $c$ is the number of vertices of $\hull(V)$.
    Lemma~\ref{lem:guaranteed-boundary} follows immediately from Property~\ref{prop:euler}.

    \begin{lemma}\label{lem:guaranteed-boundary}
        Any realization of a saturated signature is a maximal PSL triangulation.
    \end{lemma}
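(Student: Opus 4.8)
The plan is to chain together the edge-counting observation stated immediately before the lemma with Property~\ref{prop:euler}. Let $G=(V,E)$ be any realization of a saturated signature $\sigma=(V,\deg_\dN,\deg_\dS,\deg_\dE,\deg_\dW)$, so that $\sigma_G=\sigma$ by the definition of realization. First I would recall that $G$ is then a PSL graph on the vertex set $V$ whose edge count is pinned down by the signature: summing, say, the north-degrees over all vertices counts each edge exactly once, because the two endpoints of every edge have distinct $y$-coordinates and hence exactly one of them lies strictly above the other. This gives $|E|=\sum_{v\in V}\deg_\dN(v)=m_\sigma$.

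Next I would invoke the hypothesis that $\sigma$ is saturated, namely $m_\sigma=3|V|-c-3$ where $c$ is the number of vertices of $\hull(V)$. Combining this with the previous step yields $|E|=3|V|-c-3$. By Property~\ref{prop:euler}, this equality holds exactly when the PSL graph $G$ is maximal, so $G$ is maximal. Finally, since any maximal PSL graph is a PSL triangulation (as noted in the Preliminaries), $G$ is a maximal PSL triangulation, which is precisely the claim.

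There is essentially no obstacle here, and this matches the remark that the lemma follows immediately from Property~\ref{prop:euler}. The only point that deserves a moment of care is the edge-count identity $|E|=m_\sigma$: it quietly relies on the standing assumption that vertices have distinct $x$- and $y$-coordinates, so that each edge is unambiguously oriented in each cardinal direction and is therefore tallied exactly once in every directional degree sum.
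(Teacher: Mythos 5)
Your proposal is correct and is exactly the argument the paper intends: the edge count $|E|=m_\sigma$ from the directional degree sum, the saturation hypothesis giving $|E|=3|V|-c-3$, and Property~\ref{prop:euler} yielding maximality (the paper simply states that the lemma ``follows immediately'' from that property). Your added remark about distinct coordinates ensuring each edge is tallied once is a fair, if minor, point of care that the paper leaves implicit.
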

    
    \begin{figure}[b]
        \centering
        \includegraphics{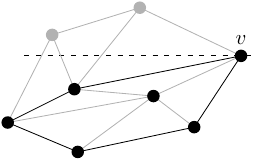}
        \caption{
            A graph whose signature saturates the halfspace below the height of $v$, thereby forcing all edges in the 
            boundary of the convex hull of this lower subgraph.
        }
        \label{fig:grid-lines}
    \end{figure}
    
    We generalize Lemma~\ref{lem:guaranteed-boundary} to certain induced subgraphs of realizations.
    Specifically, consider any closed halfspace $h$ that has a vertical or
    horizontal boundary. That is, the boundary of $h$ is either to its north or
    south, or to its east or west, respectively. For an embedded vertex set~$V$,
    we write $V_h=V\cap h$.
    For any realization $G$ of $\sigma$, the number of edges $m_h$ of $G$ contained entirely in $h$ is determined completely by $h$ and $\sigma$.
    For example, if $h$ is bounded on the north, then $m_h=\sum_{v\in V_h}\deg_\dS(v)$.
    If $h$ is bounded in a different direction, then~$m_h$ can be derived similarly from one of $\deg_\dN$, $\deg_\dE$, or $\deg_\dW$.
    We say that $\sigma$ \emph{saturates}~$h$ if~$m_h=3|V_h|-c_h-3$, where $c_h$ is the number of vertices of $\hull(V_h)$.
    Figure~\ref{fig:grid-lines} illustrates the following consequence of Lemma~\ref{lem:guaranteed-boundary}.
    
    \begin{corollary}\label{cor:guaranteed-boundary}
        If $\sigma$ saturates a closed halfspace $h$ with vertical or horizontal boundary, then for any realization $G$ of $\sigma$, the subgraph of $G$ induced by $V_h$ is a maximal PSL triangulation. In particular, all edges of $\hull(V_h)$ are forced.
    \end{corollary}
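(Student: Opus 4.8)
The plan is to reduce the statement to Property~\ref{prop:euler} applied to the induced subgraph $G[V_h]$. First I would fix an arbitrary realization $G=(V,E)$ of $\sigma$ and show that the quantity $m_h$, defined purely from the signature, is exactly the number of edges of $G[V_h]$, i.e.\ the number of edges of $G$ having both endpoints in $V_h$. Treat the case where $h$ is bounded on the north (the remaining three are symmetric, reading off $m_h$ from $\deg_\dN$, $\deg_\dE$, or $\deg_\dW$ instead). Since $h$ is a halfspace and hence convex, any edge of $G$ with both endpoints in $V_h$ lies entirely in $h$, and conversely an edge contained in $h$ has both endpoints in $V_h$; so the edges of $G$ contained in $h$ are precisely the edges of $G[V_h]$.

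Next I would verify that the identity $m_h=\sum_{v\in V_h}\deg_\dS(v)$ counts each such edge exactly once. For an edge $uv$ with both endpoints in $V_h$, orient it so that $u$ lies above $v$; then $v$ is to the south of $u$, and the edge is counted a single time, at its upper endpoint $u$, in $\deg_\dS(u)$. For an edge with exactly one endpoint $u\in V_h$, the other endpoint lies above the horizontal boundary of $h$ and hence strictly north of $u$, so it contributes to $\deg_\dN(u)$ rather than $\deg_\dS(u)$ and is not counted. Summing over $v\in V_h$ therefore yields exactly the number of edges of $G[V_h]$, confirming $m_h=|E(G[V_h])|$.

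With this identification in hand, the remainder mirrors the proof of Lemma~\ref{lem:guaranteed-boundary}. The induced subgraph $G[V_h]$ is itself a PSL graph on vertex set $V_h$, whose convex hull $\hull(V_h)$ has $c_h$ boundary vertices. The saturation hypothesis states precisely that its edge count satisfies $m_h=3|V_h|-c_h-3$, which by Property~\ref{prop:euler} is exactly the condition for a PSL graph to be maximal; hence $G[V_h]$ is a maximal PSL triangulation. Since every maximal PSL graph contains the convex-hull cycle of its own vertex set as a subgraph, $G[V_h]$ contains all edges of $\hull(V_h)$. Because $G$ was an arbitrary realization of $\sigma$, these edges occur in every realization and are therefore forced, completing the argument.

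I expect the main obstacle to be the second step: carefully justifying that the signature-defined number $m_h$ coincides with the induced edge count, in particular ruling out any contribution from edges that cross the boundary of $h$ and ensuring no fully-contained edge is double-counted. Once this directional bookkeeping is settled, the invocation of Property~\ref{prop:euler} and the convex-hull containment are immediate.
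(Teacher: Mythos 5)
Your proof is correct and follows essentially the same route the paper intends: the paper states the corollary without a separate proof, but the preceding text already sets up exactly your argument (the identity $m_h=\sum_{v\in V_h}\deg_\dS(v)$ for a north-bounded halfspace, followed by an application of Property~\ref{prop:euler} to the induced subgraph as in Lemma~\ref{lem:guaranteed-boundary}). Your careful verification that each edge inside $h$ is counted exactly once at its upper endpoint, and that boundary-crossing edges contribute nothing, is a detail the paper leaves implicit but is entirely consistent with it.
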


\section{\#P-Hardness of an Auxiliary Problem}\label{sec:reduce-to-tiles}
    We have already seen that some cardinal signatures have multiple realizations.
    We define \sharpCardSig to be the problem of counting the number of realizations of an input cardinal signature.
    We show this problem is \#P-hard in Sections~\ref{sec:tilings-to-cardinal-signatures} and~\ref{sec:cardinal-signatures-to-tilings}.
    We do so by a reduction from \sharpVC, which is \#P-complete as shown by M. Xia and W. Zhao~\cite{Xia2006p-complete}.
    We first reduce this problem to an intermediary problem.
    
    The input of the \sharpVC problem is a 3-regular bipartite planar graph $G=(V,E)$, and the output is the number of vertex covers of $G$. 
    Here, a \emph{vertex cover} of $G$ is any subset $V'\subseteq V$ of its vertices, for which each edge in $E$ is incident to at least one vertex in $V'$.
    An \emph{independent set} of $G$ is any subset $I$ of vertices, such that no two vertices in $I$ are connected by an edge in $E$.
    A subset $V'\subseteq V$ is a vertex cover if and only if $V\setminus V'$ is an independent set of $G$.
    Therefore, the number of independent sets of $G$ is equal to its number of vertex covers.
    As such, the corresponding \sharpIS problem is \#P-complete.

    We now introduce the auxiliary \sharpTiles problem, whose input is a \emph{tiling}:
    a $w\times h$ grid of \emph{tiles}, 
    where each tile has one of the \emph{tile types} illustrated in Figure~\ref{fig:tiles}, 
    and the boundaries of adjacent tiles match up.
    The result is a collection of red and blue cycles that may intersect.
    We call the tile types Crossing ii and Crossing iii \emph{critical}, see Figure~\ref{fig:tiles}~(h).
    Analogously, we call a tile critical if its type is critical.
    Two cycles intersect exactly in critical tiles, and in that case one cycle is blue, and the other is red.
    Any tiling has a corresponding bipartite intersection graph, whose vertices are the cycles, and whose edges connect the pairs of cycles that intersect.
    The output of the \sharpTiles problem is the number of independent sets in this intersection graph.
    
    \begin{figure}[b]
        \centering
        \includegraphics{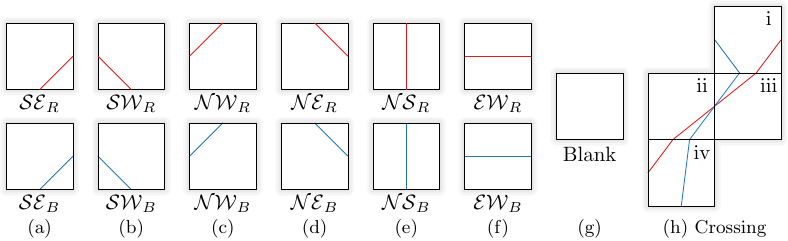}
        \caption{
            The seventeen different tile types in the \sharpTiles problem. 
            In any tiling, the four tile types in (h) always appear together as illustrated.
            In contrast to (a--f), which each have both a red and a blue tile type, the four tile types in~(h) do \emph{not} need a counterpart with colors exchanged.
        }
        \label{fig:tiles}
    \end{figure}
    
    \begin{figure}[t]
        \centering
        \includegraphics{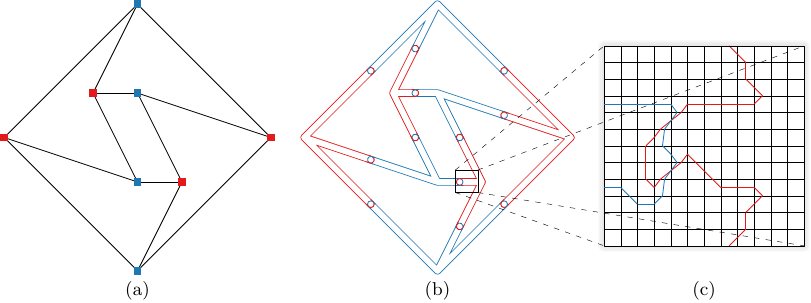}
        \caption{
            (a) An instance of \sharpIS. (b) A corresponding set of cycles in the plane. (c) Part of a corresponding tiling.
        }
        \label{fig:chicken-feet}
    \end{figure}
    
    Next, we establish hardness of the \sharpTiles problem.
    
    \begin{lemma}
    \sharpTiles is \#P-hard.
    \end{lemma}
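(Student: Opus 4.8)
The plan is to give a parsimonious reduction from \sharpIS, which the preceding discussion establishes is \#P-complete. Let $G=(V,E)$ be a $3$-regular bipartite planar graph with a fixed bipartition $V=B\sqcup R$. I will construct a tiling whose bipartite intersection graph is isomorphic to $G$, assigning the cycle of each vertex in $B$ the color blue and each vertex in $R$ the color red. Since the intersection graph of the tiling is then identical (up to isomorphism) to $G$, their independent sets are in bijection, so the number of independent sets is preserved; counting independent sets of the tiling therefore counts independent sets of $G$, as required.

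First I would fix a planar embedding of $G$ and replace each vertex $v$ by a single cycle $C_v$ of $v$'s color, shaped like a ``chicken foot'' that reaches toward each of $v$'s three neighbors, as in Figure~\ref{fig:chicken-feet}(b). For each edge $uv\in E$, I arrange the two cycles $C_u$ and $C_v$ to cross transversally once, near the midpoint of the drawn edge, and to be otherwise disjoint. Because $G$ is bipartite, every such crossing is between a blue and a red cycle, which is exactly the configuration permitted by the critical tiles Crossing~ii and Crossing~iii of Figure~\ref{fig:tiles}(h). The remaining, non-critical tile types of Figure~\ref{fig:tiles}(a--g) then suffice to route the straight segments and turns of each cycle between its crossings.

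Next I would discretize this drawing onto a $w\times h$ grid of tiles whose size is polynomial in $|V|$, obtained from a standard polynomial-area planar (e.g.\ orthogonal) drawing of $G$: each vertex gadget and each edge crossing occupies a bounded number of grid cells, and the cycle segments between them are routed so that the boundaries of adjacent tiles match up. By construction every critical tile lies at the crossing of the two cycles of some edge of $G$, and no other crossings occur, so two cycles of the tiling intersect if and only if the corresponding vertices are adjacent in $G$. Hence the intersection graph is isomorphic to $G$, and the reduction is parsimonious and computable in polynomial time.

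The main obstacle is the routing itself: I must guarantee that the only crossings are the intended ones---in particular that two cycles of the same color never meet, and that no spurious crossings arise between cycles of non-adjacent vertices---while keeping every pair of adjacent tile boundaries consistent and the total grid polynomial in size. Planarity of $G$ is exactly what makes this possible, since a planar drawing lets the chicken-foot cycles be laid out so that crossings occur only at edges, and bipartiteness guarantees that each crossing respects the blue/red requirement imposed by the critical tiles.
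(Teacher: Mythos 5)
Your proposal takes essentially the same route as the paper: reduce from \sharpIS, realize $G$ as the intersection graph of colored closed curves drawn with the given tile types on a polynomial-size grid, and observe that independent sets are preserved. One small correction: two simple closed curves in the plane that intersect transversally must cross an even number of times, so $C_u$ and $C_v$ cannot cross ``transversally once'' near the midpoint of an edge; the paper (and the block of four critical tile types in Figure~\ref{fig:tiles}(h), which always appear together) realizes each adjacency as a \emph{pair} of transversal crossings confined to that block, which changes nothing about the intersection graph or the count.
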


    \begin{proof}
    We proceed via reduction from \sharpIS.
    Consider any instance $G=(V,E)$ of \sharpIS.
    Let $R\sqcup B=V$ be a bipartition, such that each edge $e\in E$ connects a vertex in $R$ to a vertex in~$B$.
    Such a bipartition can be computed in linear time using standard graph search algorithms.
    We can realize $G$ as the intersection graph of a set of simple closed cycles $\{C_v\}_{v\in V}$ in the plane, such that any two cycles intersect in at most two points, and each intersection is transversal, as illustrated in Figure~\ref{fig:chicken-feet}~(a--b).
    We may moreover assume that every cycle lies on a polynomial-sized grid~\cite{Tamassia1989}, and is constructed using the tile types of Figure~\ref{fig:tiles} so as to form a tiling $X$, whose corresponding intersection graph is~$G$, and for which the cycle $C_v$ is red if $v\in R$, and blue if $v\in B$, see Figure~\ref{fig:chicken-feet}~(c).
    Because~$G$ is the intersection graph of the cycles of $X$, and~$X$ was constructed in polynomial time from $G$, the result follows.
    \end{proof}
    
    Consider a tiling $X$.
    Each tile of $X$ contains at most one subpath of at most one red cycle, and at most one subpath of at most one blue cycle.
    Let $X_R$ be the subset of tiles that contain part of a red cycle, and $X_B$ be the subset of tiles that contain part of a blue cycle.
    A \emph{tile selection} is a pair $(S_R,S_B)$ with $S_R\subseteq X_R$ and $S_B\subseteq X_B$.
    We say a tile selection~$(S_R,S_B)$ is \emph{noncrossing} if it meets the following constraints:
    \begin{enumerate}
        \item If a red cycle passes through the common boundary of two adjacent tiles $t$ and $t'$, then~$t\in S_R$ if and only if $t'\in S_R$.
        \label{constr:red-signal}
        \item If a blue cycle passes through the common boundary of two adjacent tiles $t$ and $t'$, then~$t\in S_B$ if and only if $t'\in S_B$.
        \label{constr:blue-signal}
        \item If $t$ is a critical tile (i.e., of type Crossing~ii or~iii), then $t\notin S_R$ or $t\notin S_B$.\label{constr:noncrossing}
    \end{enumerate}
    
    \noindent
    Lemma~\ref{lem:counting-noncrossing-cycle-sets} shows that \sharpTiles is the same as counting the number of noncrossing tile selections.
    \begin{restatable}{lemma}{counting-noncrossing-cycle-sets}
    \label{lem:counting-noncrossing-cycle-sets}
        The noncrossing tile selections of any tiling $X$ are in bijection with the independent sets in the intersection graph of $X$.
    \end{restatable}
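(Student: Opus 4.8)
The plan is to show that constraints~\ref{constr:red-signal} and~\ref{constr:blue-signal} force any tile selection to be consistent along each entire cycle, so that a noncrossing tile selection amounts to choosing which cycles to \emph{select}; constraint~\ref{constr:noncrossing} will then translate precisely into independence in the intersection graph. First I would record the structural fact underlying everything: since each tile contains at most one subpath of at most one red cycle, every tile of $X_R$ belongs to exactly one red cycle, and likewise every tile of $X_B$ belongs to exactly one blue cycle. Thus $X_R$ is partitioned into the tile sets of the individual red cycles, and $X_B$ into those of the individual blue cycles.

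The key step is a synchronization argument: for any red cycle $C$, constraint~\ref{constr:red-signal} forces either all tiles of $C$ to lie in $S_R$, or none of them. Indeed, $C$ is a simple closed curve passing through a cyclic sequence of tiles in which each consecutive pair shares a boundary that $C$ crosses. Constraint~\ref{constr:red-signal} equates membership in $S_R$ across each such shared boundary, so membership in $S_R$ is constant along the whole cyclic sequence of tiles of $C$. The identical argument using constraint~\ref{constr:blue-signal} applies to each blue cycle and $S_B$.

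Consequently, for any tile selection satisfying constraints~\ref{constr:red-signal} and~\ref{constr:blue-signal}, the set $S_R$ is exactly the union of the tile sets of some subset $I_R$ of the red cycles, and $S_B$ is the union of the tile sets of some subset $I_B$ of the blue cycles. I would define $\Phi(S_R,S_B)=I_R\cup I_B$, a subset of the cycles. This map is a bijection onto all subsets of cycles: given any subset $I$, taking $S_R$ to be the tiles of the red cycles in $I$ and $S_B$ the tiles of the blue cycles in $I$ yields a selection satisfying constraints~\ref{constr:red-signal} and~\ref{constr:blue-signal}, and this construction is inverse to $\Phi$. It remains to check that, under $\Phi$, constraint~\ref{constr:noncrossing} corresponds exactly to independence. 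A critical tile $t$ is a point where some red cycle $C_r$ and some blue cycle $C_b$ intersect, and every intersection of two cycles occurs at such a tile. We have $t\in S_R$ and $t\in S_B$ precisely when both $C_r$ and $C_b$ are selected, i.e.\ when both lie in $\Phi(S_R,S_B)$. Hence constraint~\ref{constr:noncrossing} fails at $t$ exactly when two intersecting cycles are both selected, which is exactly the failure of independence. Therefore $\Phi$ restricts to the desired bijection between noncrossing tile selections and independent sets of the intersection graph.

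I expect the main obstacle to be stating the synchronization step carefully: one must argue that the tiles met by a single cycle form a connected, indeed cyclic, chain under the adjacency ``shares a boundary crossed by the cycle,'' so that the signal constraints genuinely propagate around the entire cycle rather than only along isolated local segments. Everything else is bookkeeping once this propagation is established.
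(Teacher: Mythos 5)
Your proposal is correct and follows essentially the same route as the paper: the map sending a selection to the set of cycles whose tiles are selected is exactly the paper's bijection, and your ``synchronization'' step is what the paper invokes as transitivity of Constraints~\ref{constr:red-signal} and~\ref{constr:blue-signal}. Your version is slightly more explicit about the cyclic chain of tiles along which membership propagates, but the decomposition and the translation of Constraint~\ref{constr:noncrossing} into independence are the same.
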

    \begin{proof}
        Consider a noncrossing tile selection $(S_R,S_B)$, and let $I$ be the set consisting of the red cycles that visit a tile in $S_R$, and the blue cycles that visit a tile in $S_B$.
        
        Any pair of intersecting cycles crosses in a critical tile, and hence contains one red and one blue cycle.
        If a red cycle $C_R$ intersects a blue cycle $C_B$ in a critical tile $t$, then by Constraint~\ref{constr:noncrossing} either $t\notin S_R$ or $t\notin S_B$.
        If $t\notin S_R$, then by transitivity of Constraint~\ref{constr:red-signal}, $C_R$ does not pass through any tile in $S_R$, and hence $C_R$ does not lie in $I$.
        Symmetrically based on Constraint~\ref{constr:blue-signal}, if $t\notin S_B$, then $C_B$ does not lie in $I$.
        Hence, for any pair of intersecting cycles, at least one of them does not lie in $I$, which is hence an independent set.
        
        For the reverse direction, any independent set $I$ in the intersection graph of $X$ corresponds to a noncrossing tile selection $(S_R,S_B)$, in which a tile $t\in X_R$ (respectively $X_B$) lies in $S_R$ (respectively $S_B$) if and only if the red (respectively blue) cycle that passes through $t$ lies in~$I$.
        The above two constructions are each others' inverses, and hence form a bijection.
    \end{proof}

\section{Tilings to Cardinal Signatures}\label{sec:tilings-to-cardinal-signatures}
    Let $X$ be a tiling with $w\times h$ tiles.
    In this section, we describe the construction of a cardinal signature $\sigma$, and show that noncrossing cycle sets of $X$ are in bijection with a certain class~$\mathcal{R}$ of realizations of~$\sigma$.
    In Section~\ref{sec:cardinal-signatures-to-tilings}, we establish the \#P-hardness of \sharpCardSig by showing that every realization of~$\sigma$ lies in the class $\mathcal{R}$.
    
    \subsection{Constructing the Cardinal Signature}
        \subparagraph*{Frames for Realizations}
        We first introduce the vertices of $\sigma$, which lie near the edges of a~$w\times h$ grid of squares.
        In $\sigma$, each tile will correspond to a frame cell.
        
        \begin{figure}[b]
            \centering
            \includegraphics[page=2]{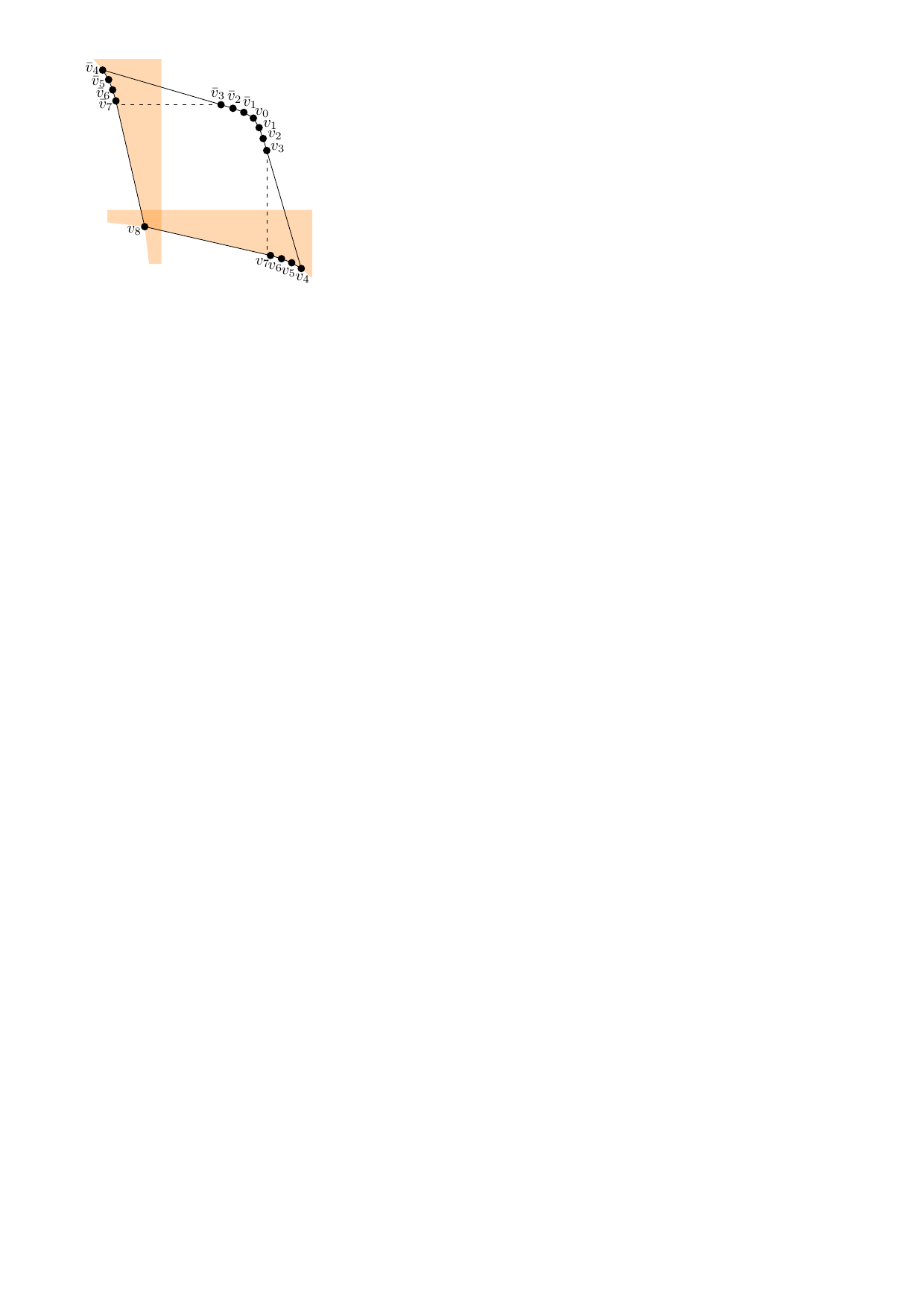}
            \hspace{3em}
            \includegraphics{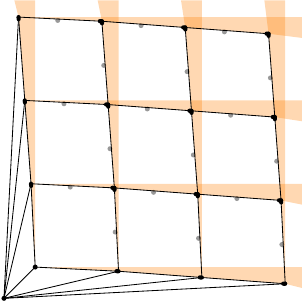}
            \caption{(Left) A frame cell, which can be sheared to be arbitrarily close to square while retaining the specification of Construction~\ref{cons:framecell}. (Right) A $3\times 3$ frame graph, as in Construction~\ref{cons:framegraph}. The regions that are forbidden from containing vertices in their interiors are indicated by shaded orange.}
            \label{fig:framecell}
        \end{figure}
        
        \begin{construction}[Frame Cell]
        \label{cons:framecell}
            See Figure~\ref{fig:framecell} (left).
            A \emph{frame cell} consists of a PSL cycle with 16 vertices, denoted in
            counterclockwise order by $v_0, \vb_{1}, \vb_{2}, \vb_{3}, \vb_{4},
            \vb_{5}, \vb_{6}, \vb_{7}, v_{8},$ $v_{7}, v_{6}, v_{5}, v_{4}, v_{3},
            v_{2}$ and $v_1$.
            A frame cell has no additional edges, but, depending on the tile type, has up to
            two interior vertices, denoted $p$ and $q$.
            The order of vertices in increasing $x$-coordinate is
            \begin{align*}
                 \vb_4, \vb_5, \vb_6, \vb_7, v_8, p, \vb_3,\vb_2,\vb_1, v_0, v_1,v_2,v_3, q, v_7,v_6,v_5,v_4,
            \end{align*}
            or the same list without $p$ or $q$. 
            The order of vertices in increasing $y$-coordinate is
            \begin{align*}
                v_4,v_5,v_6,v_7,v_8, q, v_3,v_2,v_1,v_0,\vb_1,\vb_2,\vb_3, p, \vb_7,\vb_6,\vb_5,\vb_4,
            \end{align*}
            or the same list without $p$ or $q$. Finally, if present, we ensure $p$ is to the south and west
            of the segment $\vb_3\vb_7$ and we ensure $q$ is to the south and west of the segment $v_3v_7$. 
            We call the paths defined by $v_0,v_1,v_2,v_3,v_4$, by $\vb_4, \vb_5, \vb_6,\vb_7,v_8$,
            by $v_0, \vb_1, \vb_2, \vb_3, \vb_4$, and by $v_4,v_5,v_6,v_7,v_8$ the \emph{east, west, north,} and \emph{south boundaries}, respectively.
        \end{construction}
    
        Sufficiently ``square'' frame cells can be arranged together to form a grid-like structure.
        With the goal of forcing edges in this construction, we introduce additional restrictions on how frame cells are placed
        together in Construction~\ref{cons:framegraph}, which is illustrated in Figure~\ref{fig:framecell} (right).
    
        \begin{construction}[Frame Graph]\label{cons:framegraph}
            First, arrange $(w+1)(h+1)$ vertices in a grid, connected by edges to form a lattice of $w \times h$ squares. Denote the vertical path that is $i$-th from the left by $\alpha_i$, and the horizontal path that is $j$-th from the bottom by $\beta_j$.
            Perturb the corners of these squares so that vertices previously in a vertical path $\alpha$ now form a concave-left path $\Tilde{\alpha}_i$, and vertices previously in a horizontal path $\beta$ now form a concave-down path $\Tilde{\beta}_j$.
            
        Finally, we subdivide every edge of every path $\Tilde{\alpha}_i$ and $\Tilde{\beta}_j$ (i.e., every edge of a square in our grid) with three internal vertices, and perturb them so that the resulting path forms a convex chain, 
            and the resulting square-like cycles are frame cells.%
            \footnote{Note that vertices may belong to up to four frame cells, and thus may have multiple names; 
            we disambiguate this by specifying with respect to which frame cell we discuss a vertex.} Let $\Tilde{\alpha}'_i$ and $\Tilde{\beta}'_j$ denote the resulting ``nearly vertical'' and ``nearly horizontal'' subdivided paths, respectively.
            As described in Construction~\ref{cons:framecell}, we also add interior vertices~$p$ or $q$ to frame cells depending on the type of the corresponding tile.
            We additionally ensure that their placement satisfies the following.
            \begin{enumerate}
                \item For each $0 \leq i \leq w$, the path $\Tilde{\alpha}'_i$ lies on the convex hull of all the vertices west of the southmost vertex of $\Tilde{\alpha}'_i$.
                \item For each $0 \leq j \leq h$, the path $\Tilde{\beta}_j$ lies on the convex hull of all the vertices south of the westmost vertex of $\Tilde{\beta}'_j$.
            \end{enumerate} 
            The above conditions cause particular regions to contain no vertices in their interior, and these regions are illustrated by shaded orange in Figure~\ref{fig:framecell}.
            Finally, we add a single \emph{cone vertex $c$} to the south and west of
            all frame cells in this structure, and add edges from $c$ to all
            vertices in $\Tilde{\alpha}'_0$ (i.e., to all vertices on west boundaries of frame cells in the first column) and to
            all vertices in $\Tilde{\beta}'_0$ (i.e., to all vertices on south boundaries of frame cells in the last row).
            
            We call this construction a \emph{frame graph}, denoted~$F$.
        \end{construction}
        
        Because the nearly horizontal and vertical paths in a frame graph are convex, we have the following desirable property.
        \begin{lemma}[Forced Frame Edges]
        \label{lem:forceframe}
            Suppose that $G=(V,E)$ is a maximal PSL triangulation with cardinal signature
            $\sigma$. Also suppose that the frame graph $F$ is a
            subgraph of $G$, and~$G$ contains no vertices other than those of $F$. 
            Then the edges of $F$ appear in every realization of $\sigma$; that is, the edges of $F$ are forced.
        \end{lemma}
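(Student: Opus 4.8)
The plan is to force the edges of $F$ family by family, in each case producing a vertical or horizontal halfspace to which Corollary~\ref{cor:guaranteed-boundary} applies. First I note that $\sigma$ is saturated: since $G$ is maximal and realizes $\sigma$, we have $m_\sigma=|E(G)|=3|V|-c-3$. Hence by Lemma~\ref{lem:guaranteed-boundary} \emph{every} realization of $\sigma$ is a maximal PSL triangulation, and in particular every edge of $\hull(V)$ is forced. It remains to force the interior frame edges: the nearly-vertical chains $\Tilde{\alpha}'_i$, the nearly-horizontal chains $\Tilde{\beta}'_j$, and the cone edges at $c$. Throughout I would use the given $G$ as a witness: because $G$ realizes $\sigma$, the quantity $m_h$ equals $|E(G[V_h])|$, so $\sigma$ saturates a halfspace $h$ if and only if $G[V_h]$ is maximal; and since $G$ triangulates $\hull(V)$, this happens exactly when $\partial\hull(V_h)\subseteq E(G)$, i.e.\ when no face of $G$ straddles $\partial\hull(V_h)$.

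To force $\Tilde{\alpha}'_i$, let $s_i$ be its southmost (hence eastmost) vertex and take the closed halfspace $h=\{x\le x(s_i)\}$. Condition~1 of Construction~\ref{cons:framegraph} places $\Tilde{\alpha}'_i$ as the eastern boundary of $\hull(V_h)$, and the resulting empty (orange) regions guarantee that the remainder of $\partial\hull(V_h)$ consists only of forced edges: the lower endpoint $s_i$ is joined to the cone vertex $c$ by a cone edge of $F$ (this is the crucial role of the cone in making the boundary closeable by edges of $G$), the upper endpoint attaches to $\hull(V)$, and the rest of the boundary runs along $\hull(V)$. Every such edge lies in $F\cup\hull(V)\subseteq E(G)$, so $\partial\hull(V_h)\subseteq E(G)$, whence $G[V_h]$ is maximal and $\sigma$ saturates $h$. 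Corollary~\ref{cor:guaranteed-boundary} then forces all of $\hull(V_h)$, in particular every edge of $\Tilde{\alpha}'_i$. Running the symmetric argument with horizontal halfspaces and Condition~2 forces every $\Tilde{\beta}'_j$.

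With all chains forced, I would force the cone edges by a degree count rather than a halfspace. In any realization $G'$, the forced convex chains $\Tilde{\alpha}'_0$ and $\Tilde{\beta}'_0$ together separate $c$ from all other vertices, and the cone region between them is empty of vertices; hence the only vertices $c$ can be joined to without crossing a forced edge are those of $\Tilde{\alpha}'_0\cup\Tilde{\beta}'_0$. Since $c$ lies to the south and west of everything, $\deg_\dS(c)=\deg_\dW(c)=0$, and $\deg_\dN(c)+\deg_\dE(c)$ equals $\deg_G(c)$, which in turn equals $|\Tilde{\alpha}'_0\cup\Tilde{\beta}'_0|$ (in $G$ as well, $c$ cannot reach past these chains, so its neighborhood is exactly this set). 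A vertex whose prescribed degree equals the size of its only admissible neighbor set must be adjacent to the entire set; therefore $c$ is joined to every vertex of $\Tilde{\alpha}'_0\cup\Tilde{\beta}'_0$, and all cone edges are forced. This exhausts the edges of $F$.

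I expect the main obstacle to be the geometric bookkeeping behind the second paragraph: verifying, for every \emph{interior} column (and row), that $\partial\hull(V_h)$ really is the concatenation of $\Tilde{\alpha}'_i$, the single cone edge at $s_i$, and a piece of $\hull(V)$, with no interior vertex — an interior $p$ or $q$, or a vertex of an adjacent chain — protruding onto the convex hull and breaking the inclusion $\partial\hull(V_h)\subseteq E(G)$. This is exactly what Conditions~1--2 and the forbidden regions of Construction~\ref{cons:framegraph} are engineered to prevent, and making it airtight for the interior chains (as opposed to the outermost ones, which already sit on $\hull(V)$) is where the real work lies.
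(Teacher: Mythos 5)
Your proposal follows essentially the same route as the paper: force each nearly-horizontal and nearly-vertical chain by showing that $\sigma$ saturates the corresponding axis-aligned halfspace (using $G$ as the witness that the induced subgraph is maximal) and invoking Corollary~\ref{cor:guaranteed-boundary}, then force the cone edges by noting that $c$'s prescribed degree toward the already-forced chains $\Tilde{\alpha}'_0\cup\Tilde{\beta}'_0$ exhausts its only admissible neighbors. The only quibble is a bookkeeping slip in the cone step --- since every other vertex is both north and east of $c$, one has $\deg_\dN(c)=\deg_\dE(c)=\deg_G(c)$ rather than $\deg_\dN(c)+\deg_\dE(c)=\deg_G(c)$ --- but the counting argument goes through unchanged with $\deg_\dN(c)$ alone, exactly as in the paper.
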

        \begin{proof}
            Let $h$ denote the halfspace below the $x$-coordinate of the west-most
            vertex of some nearly horizontal path $\tilde{\beta}$ of $F$. By construction, $\tilde{\beta}$ is on the boundary of $\hull(V_h)$. Since,
            additionally,~$G$ is a maximal PSL triangulation, $\sigma$ saturates
            $h$. Then by Corollary~\ref{cor:guaranteed-boundary}, all edges in $\tilde{\beta}$ are forced. A nearly identical argument holds if
            we consider the halfspace to the west of the $y$-coordinate of the
            north-most vertex in a nearly vertical path of $F$. Next, consider
            the cone vertex of $F$, denoted $c$. The north degree of $c$
            equals the number of vertices on the south and west boundaries of the
            frame cells in the first column and last row of $F$, respectively.
            Since these boundaries are instances of nearly horizontal and vertical
            paths as discussed previously, we know $c$ cannot be adjacent to any
            other vertices to its north, otherwise, we would have an edge crossing.
            Thus, the edges adjacent to $c$ in $F$ are forced.
            We have shown that each edge of $F$ is forced.
        \end{proof}

        \begin{figure}[p!]
            \centering
            \includegraphics{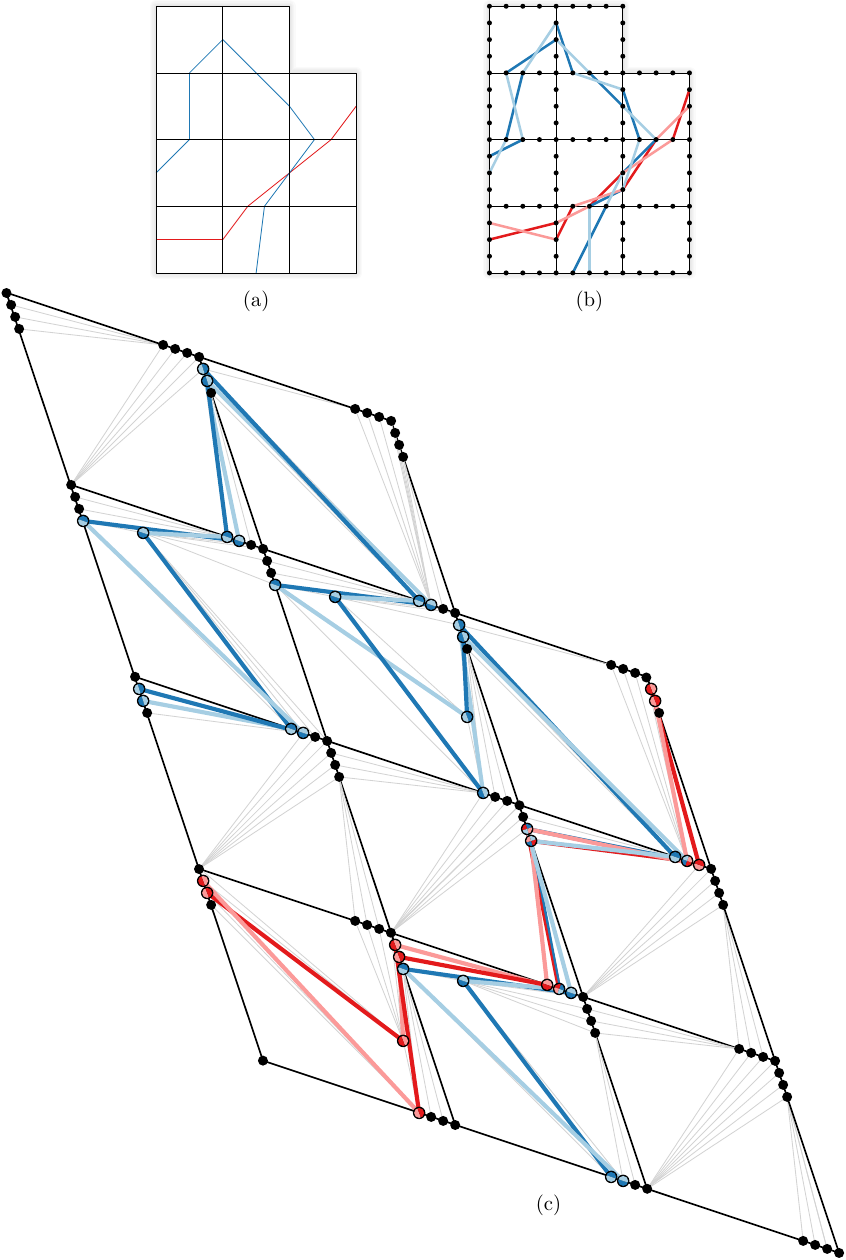}%
            \caption{(a) All possible types of tiles for the \sharpTiles problem 
            (up to simultaneously exchanging colors and shades of edges in monochromatic types of gadgets). 
            (b) A schematic representation of the gadgets of (c).
            (c) The corresponding gadgets in the same relative positions for the \sharpCardSig
            problem. To stress the position of blue and red wires on the right, we include black edges only with a thin faded line,
            other than those defining frame cells, which are drawn in solid black.}
            \label{fig:grid_gadgets}
        \end{figure}
        
    \subparagraph*{Filling the Frame Cells}
        Next, we introduce the gadgets that correspond to each tile type.
        First, we specify the data that constitutes a gadget.
        \begin{definition}[Gadget]
            A gadget $\dG=(V_\dG,R_\dG)$ consists of a set of vertices $V_\dG$ of some frame cell and a set~$R_\dG$ of \emph{intended realizations} of $V_\dG$.
        \end{definition}
        
        We illustrate the intended realizations of a gadget $\dG$ of a specific tile type via Figure~\ref{fig:grid_gadgets}, which displays an edge-colored graph $U_\dG$ that is a supergraph of each intended realization. 
        When we discuss a general gadget, we always mean one of the gadgets shown in Figure~\ref{fig:grid_gadgets}.
        In order to specify the intended realizations, we define a \emph{local coloring} to consist of three colors: black, and either light or dark red, and either light or dark blue.
        We denote the local colorings 
        $\chi[\lR\lB]=\{\text{black},\text{light red},\text{light blue}\}$, 
        $\chi[\lR\dB]=\{\text{black},\text{light red},\text{dark blue}\}$, 
        $\chi[\dR\lB]=\{\text{black},\text{dark red},\text{light blue}\}$, and 
        $\chi[\dR\dB]=\{\text{black},\text{dark red},\text{dark blue}\}$.
        For a local coloring $\chi$, denote by $U_\dG^\chi$ the subgraph of $U_\dG$ consisting of the edges of the local coloring.
        The graph $U_\dG^\chi$ is a triangulation of $V_\dG$ unless it has crossing edges.
        Observe that for a local coloring $\chi$, the edges of $U_\dG^\chi$ cross if
        and only if $\chi=\chi[\lR \lB]$ and $\dG$ corresponds to a critical tile type.
        The set $R_\dG$ of intended triangulations consists of the (at most four) graphs~$U_\dG^\chi$ without crossing edges, where $\chi$ ranges over local colorings.
        Note that $\bigcup R_\dG=\bigcup U_\dG^\chi=U_\dG$.

    \subparagraph{The Cardinal Signature}\label{sec:frame-signature}
        To define the cardinal signature $\sigma$ corresponding to the tiling $X$, 
        we need to specify both a vertex set~$V$, and the cardinal degrees for each vertex.
        The vertex set $V$ of $\sigma$ consists of the vertices of a frame graph with the same number and arrangement of frame cells as $X$ has tiles, and with copies of $p$ and $q$ whenever a tile type corresponds to a gadget with these interior vertices.
        We now specify the cardinal degrees.
        
        In the frame graph, we insert in each frame cell, the graph $U_\dG$ of
        its corresponding gadget (as in Figure~\ref{fig:grid_gadgets}).
        In each cardinal direction, any vertex of the resulting graph has as many light red edges as it has dark red edges, and as many light blue edges as it has dark blue edges.
        Each vertex also has at most one edge of each of the colors dark red, light red, dark blue, and light blue.
        As we intend for at most one blue and one red edge to be selected per vertex, we set the cardinal degree of a vertex $v$ in any particular cardinal direction to be its number of incident black, plus half the number of incident colored edges in that direction.

    \subsection{Mapping Noncrossing Tile Selections to Intended Realizations}
        We want to match every noncrossing tile selection of $X$ to a unique realization of $\sigma$.
        We do this by first defining a map from tiles to local colorings, and then to the intended realization of their associated gadget with this local coloring.
        
        Let $\varphi$ be a function that assigns a local coloring to each tile of $X$.
        Let $F(\varphi)$ be the graph that is the union of the frame graph and, for each tile $t$ with corresponding gadget $\dG$, the realization $U_\dG^{\varphi(t)}$.
        The following constraints on $\varphi$ reflect Constraints~\ref{constr:red-signal}--\ref{constr:noncrossing} of Section~\ref{sec:reduce-to-tiles}:
        \begin{enumerate}
            \item If a red cycle passes through the common boundary of two adjacent tiles $t$ and $t'$, then $\varphi(t)$ and $\varphi(t')$ contain the same shade of red.
            \label{constr:phi-red-signal}
            \item If a blue cycle passes through the common boundary of two adjacent tiles $t$ and $t'$, then $\varphi(t)$ and $\varphi(t')$ contain the same shade of blue.
            \label{constr:phi-blue-signal}
            \item If $t$ is a critical tile, then $\varphi(t)$ is not $\chi[\lR\lB]$.\label{constr:phi-noncrossing}
        \end{enumerate}
        
        \begin{observation}\label{obs:phi}
            Let $\varphi$ be a function that assigns a local coloring to each tile of $X$.
            \begin{itemize}
                \item $F(\varphi)$ satisfies the cardinal degrees of $\sigma$ if and only if $\varphi$ meets Constraints~\ref{constr:phi-red-signal} and~\ref{constr:phi-blue-signal}.
                \item $F(\varphi)$ is planar (and a maximal PSL triangulation) if and only if $\varphi$ meets Constraint~\ref{constr:phi-noncrossing}.
                \item $F(\varphi)$ is a realization of $\sigma$ if and only if $\varphi$ satisfies Constraints~\ref{constr:phi-red-signal}--\ref{constr:phi-noncrossing} simultaneously.
            \end{itemize}
        \end{observation}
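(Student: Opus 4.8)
The plan is to split the two requirements that $F(\varphi)$ must meet in order to be a realization of $\sigma$ — that its cardinal degrees equal those of $\sigma$, and that it is an honest PSL graph (planar) — handle them separately in the first two items, and then combine them in the third. Throughout I use that $F(\varphi)$ and $\sigma$ share the same vertex set by construction, and the two properties established just above: in the supergraph $U$ (the frame graph with every $U_\dG$ inserted) each vertex has equally many light and dark red edges (resp.\ blue) in each cardinal direction, and $U_\dG^\chi$ has crossing edges if and only if $\chi=\chi[\lR\lB]$ and the tile is critical.

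For the first item I argue vertex by vertex and direction by direction. Every black edge of $U_\dG$ lies in $U_\dG^{\varphi(t)}$ for all local colorings, so the black contribution to each cardinal degree is independent of $\varphi$ and is exactly what $\sigma$ already records. Hence matching the cardinal degrees reduces to checking, at each vertex $v$ and direction $\mathcal D$, that the number of colored edges selected by $\varphi$ equals half the colored edges of $U$ incident to $v$ in direction $\mathcal D$, which by the balance of light and dark edges equals the number of light red (resp.\ blue) edges of $U$ there. Red and blue are symmetric, so consider red. For a vertex all of whose incident red edges lie in a single cell, either shade contributes exactly this number. The only vertices where a discrepancy can arise are those where a red cycle crosses the boundary between two adjacent cells: there the incident red edges are distributed between the two cells, and the half-count is attained exactly when both cells select the same shade. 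This is precisely Constraint~\ref{constr:phi-red-signal}, and symmetrically Constraint~\ref{constr:phi-blue-signal} governs blue; conversely, if both constraints hold the count is correct at every vertex, so the cardinal degrees agree.

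For the second item, planarity is controlled entirely within cells. The frame graph is a PSL graph by Construction~\ref{cons:framegraph}, and each $U_\dG^{\varphi(t)}$ is confined to its (closed) frame cell; since distinct cells are interior-disjoint, no two gadgets cross and no gadget crosses the frame. Thus $F(\varphi)$ fails planarity exactly when a single cell contains a crossing, i.e.\ when some critical tile $t$ has $\varphi(t)=\chi[\lR\lB]$; avoiding this is exactly Constraint~\ref{constr:phi-noncrossing}. When it holds, every $U_\dG^{\varphi(t)}$ triangulates its cell and the frame graph together with the cone vertex triangulates the rest, so $F(\varphi)$ is a maximal PSL triangulation — note this uses only Constraint~\ref{constr:phi-noncrossing}, independently of the shade choices.

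The third item is then just the conjunction: by definition $F(\varphi)$ is a realization of $\sigma$ iff it is a PSL graph with cardinal signature $\sigma$, which (the vertex sets already agreeing) means it is planar and has the cardinal degrees of $\sigma$. By the second item planarity is equivalent to Constraint~\ref{constr:phi-noncrossing}, and by the first item the degree condition is equivalent to Constraints~\ref{constr:phi-red-signal} and~\ref{constr:phi-blue-signal}, so $F(\varphi)$ is a realization iff all three hold simultaneously. I expect the main obstacle to be the degree bookkeeping of the first item: making precise, at a boundary crossing, how the red (or blue) edges incident to the shared vertex are partitioned between the two cells, and verifying that the half-count is met exactly when their shades agree. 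The planarity argument of the second item (cell-confinement plus the given crossing characterization) and the combination in the third item are routine by comparison.
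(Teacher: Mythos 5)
Your argument is correct and supplies exactly the reasoning the paper leaves implicit: the statement is given there as an unproved observation, resting directly on the construction of $\sigma$ (black count plus half the colored count, with light and dark edges balanced per vertex and direction) and on the stated fact that $U_\dG^\chi$ has crossings precisely when $\chi=\chi[\lR\lB]$ on a critical tile, which is the same decomposition into degree-matching versus planarity that you use. The one step that genuinely requires inspecting Figure~\ref{fig:grid_gadgets} --- that wherever a cycle crosses a tile boundary, some shared vertex has its light and dark edges of that color split between the two cells in a common cardinal direction, so a shade disagreement really does produce a degree mismatch (the forward direction of the first item) --- is the same point the paper silently relies on, and you correctly identify it as the only nontrivial piece of bookkeeping.
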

        
        The constant map $\varphi$ that assigns $\chi[\dR\dB]$ meets Constraints~\ref{constr:phi-red-signal}--\ref{constr:phi-noncrossing} simultaneously.
        For this particular map $\varphi$, $F(\varphi)$ by Observation~\ref{obs:phi} meets the conditions of Lemma~\ref{lem:forceframe}, so we obtain:
        \begin{lemma}\label{lem:forceframe2}
            Any realization of $\sigma$ contains the frame graph.
        \end{lemma}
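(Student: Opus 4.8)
The plan is to force the frame graph in every realization by producing a single well-behaved realization and then invoking the already-established Lemma~\ref{lem:forceframe}, whose conclusion (that the edges of $F$ are forced) holds for \emph{all} realizations of $\sigma$ as soon as \emph{one} maximal PSL triangulation realizing $\sigma$ with vertex set equal to that of $F$ is exhibited. Thus the entire task reduces to checking that such a witness realization exists.

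For the witness I would take the constant map $\varphi$ that assigns the local coloring $\chi[\dR\dB]$ to every tile of $X$. Verifying Constraints~\ref{constr:phi-red-signal}--\ref{constr:phi-noncrossing} is immediate: adjacent tiles always share the same (dark) shade of red and of blue, so the signaling Constraints~\ref{constr:phi-red-signal} and~\ref{constr:phi-blue-signal} hold, and since $\chi[\dR\dB]\neq\chi[\lR\lB]$ the noncrossing Constraint~\ref{constr:phi-noncrossing} holds at every critical tile. By Observation~\ref{obs:phi}, $F(\varphi)$ is then a realization of $\sigma$ and, in particular, a maximal PSL triangulation. Since $F(\varphi)$ is by definition the union of the frame graph with the gadget edges $U_\dG^{\chi[\dR\dB]}$, it contains $F$ as a subgraph.

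It then remains to apply Lemma~\ref{lem:forceframe} with $G := F(\varphi)$: this $G$ is a maximal PSL triangulation realizing $\sigma$ and contains $F$ as a subgraph. The conclusion is that every edge of $F$ appears in every realization of $\sigma$, and since every realization of $\sigma$ already carries the frame graph's vertex set, this says exactly that every realization contains the frame graph. The one hypothesis needing care---and the step I expect to be the main obstacle to phrase cleanly---is that $G$ contains no vertices beyond those of $F$. This is where it matters that the frame graph of Construction~\ref{cons:framegraph} already includes the interior vertices $p$ and $q$ prescribed by each tile type, so that the vertex set of $\sigma$ coincides with that of $F$; the gadget edges of $U_\dG^{\chi[\dR\dB]}$ are drawn only among vertices already present in $F$ and introduce no new ones.
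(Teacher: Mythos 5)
Your proposal is correct and follows exactly the paper's route: exhibit the witness realization $F(\varphi)$ for the constant map $\varphi\equiv\chi[\dR\dB]$, check Constraints~\ref{constr:phi-red-signal}--\ref{constr:phi-noncrossing} so that Observation~\ref{obs:phi} applies, and then invoke Lemma~\ref{lem:forceframe} to force all edges of $F$ in every realization. Your added remark that the frame graph's vertex set already includes the interior vertices $p$ and $q$ (so the ``no extra vertices'' hypothesis of Lemma~\ref{lem:forceframe} is met) is a correct and worthwhile clarification of a point the paper leaves implicit.
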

        
        Denote by $\mathcal{R}$ the set of realizations of $\sigma$ that contain for every frame cell, an intended realization of its gadget.
        Because each intended realization of a gadget $\dG$ is of the form $U_\dG^\chi$ for some local coloring $\chi$, we have the following:
        \begin{lemma}\label{lem:phi-F}
            $\mathcal{R}$ consists of exactly the graphs $F(\varphi)$ for which $\varphi$ meets Constraints~\ref{constr:phi-red-signal}--\ref{constr:phi-noncrossing}.
        \end{lemma}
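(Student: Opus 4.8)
The plan is to prove the asserted set equality by establishing both inclusions, using Observation~\ref{obs:phi} as the workhorse throughout. Before either direction, I would record a normalization that makes the edge bookkeeping uniform: the constant coloring $\chi[\dR\dB]$ meets Constraints~\ref{constr:phi-red-signal}--\ref{constr:phi-noncrossing}, so by the third bullet of Observation~\ref{obs:phi} the graph $F(\chi[\dR\dB])$ is a realization of $\sigma$, and by the second bullet (Constraint~\ref{constr:phi-noncrossing} holds) it is a maximal PSL triangulation. Property~\ref{prop:euler} then gives $m_\sigma=3|V|-c-3$, where $c$ is the number of vertices of $\hull(V)$; that is, $\sigma$ is saturated. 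In particular, ``maximal PSL triangulation on $V$'' and ``PSL graph on $V$ with exactly $m_\sigma$ edges'' become interchangeable.

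For the inclusion $\{F(\varphi):\varphi\text{ meets Constraints~\ref{constr:phi-red-signal}--\ref{constr:phi-noncrossing}}\}\subseteq\mathcal{R}$, I would take such a $\varphi$. The third bullet of Observation~\ref{obs:phi} immediately gives that $F(\varphi)$ is a realization of $\sigma$, so it only remains to check that $F(\varphi)$ contains an intended realization in every frame cell. In the cell of a tile $t$ with gadget $\dG$, the part of $F(\varphi)$ interior to the cell is exactly $U_\dG^{\varphi(t)}$; recalling that $U_\dG^\chi$ has crossing edges precisely when $\chi=\chi[\lR\lB]$ and $\dG$ is critical, Constraint~\ref{constr:phi-noncrossing} rules out exactly that case. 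Hence $U_\dG^{\varphi(t)}$ is crossing-free, so it lies in $R_\dG$, i.e.\ it is an intended realization, and therefore $F(\varphi)\in\mathcal{R}$.

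For the reverse inclusion, I would take $G\in\mathcal{R}$. By definition $G$ contains, in each frame cell, some intended realization $U_\dG^{\chi}$ of the corresponding gadget; choosing one such coloring per cell defines a map $\varphi$ with $\varphi(t)=\chi$. Since intended realizations are crossing-free, $\varphi(t)\neq\chi[\lR\lB]$ at every critical tile, so $\varphi$ meets Constraint~\ref{constr:phi-noncrossing} automatically. By Lemma~\ref{lem:forceframe2}, $G$ also contains the entire frame graph, and $F(\varphi)$ is by construction exactly the union of the frame graph with the chosen $U_\dG^{\varphi(t)}$; as these all share the vertex set $V$, this gives $F(\varphi)\subseteq G$ as edge sets. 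Now I count: Constraint~\ref{constr:phi-noncrossing} and Observation~\ref{obs:phi} make $F(\varphi)$ a maximal PSL triangulation, so $|E(F(\varphi))|=m_\sigma$, while $|E(G)|=m_\sigma$ because $G$ realizes $\sigma$. Equal edge counts together with $F(\varphi)\subseteq G$ force $G=F(\varphi)$. Finally, because $G=F(\varphi)$ is a realization of $\sigma$, the third bullet of Observation~\ref{obs:phi} (read in the converse direction) certifies that $\varphi$ meets all of Constraints~\ref{constr:phi-red-signal}--\ref{constr:phi-noncrossing}, exhibiting $G$ in the desired form.

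The step I expect to carry the real weight is the collapse $F(\varphi)\subseteq G\Rightarrow F(\varphi)=G$ in the reverse direction: a priori $G$ could carry extra edges within a frame cell, or edges spanning several cells, beyond those of the chosen intended realizations, and it is precisely the saturation identity $m_\sigma=3|V|-c-3$ that forbids this. The only bookkeeping to be careful about is that the vertex sets of $F(\varphi)$ and $G$ genuinely coincide (so that edge-set containment is meaningful), and that the per-cell choice of intended realization, though possibly non-unique a priori, is pinned down a posteriori by the equality $G=F(\varphi)$. Everything else reduces to invoking Observation~\ref{obs:phi}, which already encapsulates the harder degree-matching and planarity analysis.
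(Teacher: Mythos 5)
Your proof is correct and uses exactly the ingredients the paper intends: the paper states this lemma without proof, as an immediate consequence of the definition of $\mathcal{R}$, the fact that every intended realization has the form $U_\dG^\chi$, Observation~\ref{obs:phi}, and Lemma~\ref{lem:forceframe2}. Your argument simply fills in the details the paper leaves implicit (in particular, the saturation/edge-count step closing the gap from $F(\varphi)\subseteq G$ to $F(\varphi)=G$, which also follows directly from the maximality of $F(\varphi)$ asserted in Observation~\ref{obs:phi}), so it is essentially the same approach.
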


        For any tile selection of $X$, we construct a 
        map $\varphi$ and graph $F(\varphi)$.
        We show that this construction yields a bijection between noncrossing tile selections and realizations in~$\mathcal{R}$.

        \begin{theorem}\label{thm:tile-selection-bijection}
            Let $(S_R, S_B)$ be a tile selection of $X$.
            Let $\varphi_{S_R, S_B}$ be the function that maps any tile $t$ to the local coloring that consists of the colors
            \begin{itemize}
                \item black;
                \item light red if $t \in S_R$, or dark red if $t \not\in S_R$; and
                \item light blue if $t \in S_B$, or dark blue if $t \not\in S_B$.
            \end{itemize}
            The map $(S_R,S_B)\mapsto F(\varphi_{S_R,S_B})$ is a bijection between noncrossing tile selections of $X$ and~$\mathcal{R}$.
        \end{theorem}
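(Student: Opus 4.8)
The plan is to factor the stated map as $(S_R,S_B)\mapsto\varphi_{S_R,S_B}\mapsto F(\varphi_{S_R,S_B})$ and to rely on Lemma~\ref{lem:phi-F}, which already identifies $\mathcal{R}$ with the set of graphs $F(\varphi)$ where $\varphi$ ranges over functions satisfying Constraints~\ref{constr:phi-red-signal}--\ref{constr:phi-noncrossing}. First I would check the map is well defined, i.e.\ that $F(\varphi_{S_R,S_B})\in\mathcal{R}$ whenever $(S_R,S_B)$ is noncrossing; by Lemma~\ref{lem:phi-F} it suffices to show that $\varphi_{S_R,S_B}$ meets the three color constraints. Each is a direct translation: since $t\in S_R$ is by definition the same as $\varphi_{S_R,S_B}(t)$ carrying light red, Constraint~\ref{constr:red-signal} (a red cycle across the shared boundary of $t,t'$ forces $t\in S_R\iff t'\in S_R$) becomes Constraint~\ref{constr:phi-red-signal} (equal shade of red), symmetrically for blue, and Constraint~\ref{constr:noncrossing} (a critical $t$ has $t\notin S_R$ or $t\notin S_B$) becomes $\varphi_{S_R,S_B}(t)\neq\chi[\lR\lB]$, which is Constraint~\ref{constr:phi-noncrossing}.

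Next I would prove injectivity by recovering $(S_R,S_B)$ from the output graph. The frame graph is common to all $F(\varphi)$, so the colorings are witnessed only by the gadget edges inside each frame cell. For a tile $t\in X_R$, the corresponding gadget genuinely carries red edges, and by the construction of the gadgets the light-red and dark-red intended realizations differ in which red edges they contain; hence the red shade of $\varphi_{S_R,S_B}(t)$, and therefore whether $t\in S_R$, can be read off from $F(\varphi_{S_R,S_B})$. The same holds for blue on tiles of $X_B$. Since $S_R\subseteq X_R$ and $S_B\subseteq X_B$, this determines $S_R$ and $S_B$ completely, so the composite map is injective.

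Finally, for surjectivity I would take any $R\in\mathcal{R}$, write $R=F(\varphi)$ with $\varphi$ meeting Constraints~\ref{constr:phi-red-signal}--\ref{constr:phi-noncrossing} (Lemma~\ref{lem:phi-F}), and set $S_R=\{t\in X_R:\varphi(t)\text{ carries light red}\}$ and $S_B=\{t\in X_B:\varphi(t)\text{ carries light blue}\}$. Reversing the translations of the first paragraph shows $(S_R,S_B)$ is noncrossing, and it remains to verify $F(\varphi_{S_R,S_B})=R$. This is where I expect the one genuine subtlety: $\varphi$ and $\varphi_{S_R,S_B}$ need not coincide, because on a tile of $X\setminus X_R$ (carrying no red cycle) $\varphi$ is free to use either red shade whereas $\varphi_{S_R,S_B}$ always uses dark red, and symmetrically for blue. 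The resolution is that a gadget whose tile type carries no red cycle contains no red edges, so its intended realization $U_\dG^\chi$ does not depend on the red shade of $\chi$; consequently $U_\dG^{\varphi(t)}=U_\dG^{\varphi_{S_R,S_B}(t)}$ for every tile $t$, and therefore $F(\varphi_{S_R,S_B})=F(\varphi)=R$. Together with well-definedness and injectivity, this establishes the bijection.
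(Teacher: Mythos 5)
Your proposal is correct and follows essentially the same route as the paper's proof: well-definedness by translating the noncrossing constraints into Constraints~\ref{constr:phi-red-signal}--\ref{constr:phi-noncrossing} and invoking Lemma~\ref{lem:phi-F}, injectivity by observing that for $t\in X_R$ (resp.\ $X_B$) the light- and dark-red (resp.\ blue) gadget realizations differ, and surjectivity via the same explicit construction of $(S_R,S_B)$ from $\varphi$. Your explicit remark that gadgets of tiles outside $X_R$ (resp.\ $X_B$) are insensitive to the red (resp.\ blue) shade is exactly the fact the paper's contradiction argument relies on, just stated more directly.
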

        \begin{proof}
            Let $(S_R,S_B)$ be a noncrossing tile selection of $X$.
            By construction, $\varphi_{S_R,S_B}$ meets Constraints~\ref{constr:phi-red-signal}--\ref{constr:phi-noncrossing}, so $F(\varphi_{S_R,S_B})\in\mathcal{R}$. It remains to show injectivity and surjectivity.

            For injectivity, suppose that $(S_R^1,S_B^1)$ and $(S^2_R,S^2_B)$ are distinct noncrossing tile selections of $X$.
            Then there exists a tile $t$ such that (a) exactly one of $S^1_R$ and $S^2_R$ contains $t$, or (b) exactly one of $S^1_B$ and $S^2_B$ contains $t$.
            In case (a), we have $t\in X_R$, and the realizations of its gadget will differ between $F(\varphi_{S^1_R,S^1_B})$ and $F(\varphi_{S^2_R,S^2_B})$.
            The argument for case (b) is symmetric.
            So, distinct noncrossing tile selections map to distinct graphs, proving injectivity.
            
            For surjectivity, consider an arbitrary element of $\mathcal{R}$.
            By Lemma~\ref{lem:phi-F}, it can be expressed as $F(\varphi)$ for some such function $\varphi$ that meets Constraints~\ref{constr:phi-red-signal}--\ref{constr:phi-noncrossing}.
            Based on $\varphi$, we construct some noncrossing cycle set $(S_R,S_B)$ of $X$ such that $F(\varphi_{S_R,S_B})=F(\varphi)$.
            Let $S_R=\{t\in X_R\mid \text{light red}\in\varphi(t)\}$ and $S_B=\{t\in X_B\mid \text{light blue}\in\varphi(t)\}$.
            Because $\varphi$ meets Constraints~\ref{constr:phi-red-signal}--\ref{constr:phi-noncrossing} from the beginning of this section, $(S_R,S_B)$ is indeed a noncrossing tile selection.
            Now suppose for a contradiction that ${F(\varphi_{S_R,S_B})\neq F(\varphi)}$.
            Then there exists some tile~$t$ such that~$U_\dG^{\varphi_{S_R,S_B}(t)}\neq U_\dG^{\varphi(t)}$.
            Among $U_\dG^{\varphi_{S_R,S_B}(t)}$ and $U_\dG^{\varphi(t)}$, either (a) exactly one of them uses light red edges, or (b) exactly one of them uses light blue edges.
            In case (a), $\dG$ has a realization with a light red edge, so $t\in X_R$.
            Because $t\in X_R$, $\varphi_{S_R,S_B}(t)$ by construction contains light red if and only if $\varphi(t)$ contains light red.
            Therefore, we are not in case (a).
            A symmetric argument excludes case~(b), which is a contradiction, and proves surjectivity.
        \end{proof}

        At this point, we have shown that the noncrossing tile selections are in bijection with the realizations of $\sigma$ in which each gadget uses one of the intended realizations.
        In the next section, we show that these realizations are the only ones possible.
        Specifically, we show that there is no realization of $\sigma$ in which a gadget does not use any of its intended realizations.

\section{Mapping Noncrossing Tile Selections to Realizations: Surjectivity}\label{sec:cardinal-signatures-to-tilings}
    In Section~\ref{sec:tilings-to-cardinal-signatures}, we constructed a cardinal signature $\sigma$ that admits a realization $F(\varphi_{S_R,S_B})$ for every noncrossing tile selection $(S_R, S_B)$ of a tiling $X$.
    In this section, we show that these realizations are the only realizations of $\sigma$.
    
    From Lemma~\ref{lem:forceframe2}, we know that every realization of $\sigma$
    contains the frame graph. The main objective of this section is to show that
    each frame cell admits only the intended realizations of its gadget.
    To this end, we consider \emph{triangulations of a gadget~$\dG$}: triangulations with vertex set $V_\dG$ that have the frame cell boundary on their outer face.
    To make our aim precise, we formulate a notion of correctness for a gadget in Definition~\ref{def:correct_gadget}.
    
    The possible realizations of a gadget for a fixed cardinal signature depend on the realizations of its neighbors.
    Under the assumption that each gadget is correct, in Section~\ref{sec:induction}, we show for a gadget $\dG$ that, if the neighboring gadgets on the south and west admit only the intended realizations, then so does $\dG$.
    As such, we may assume that the gadgets to the south and west correspond to (possibly distinct) local colorings.
    From these local colorings, we can infer how many edges each vertex on the south and west boundary of $\dG$ must be used to triangulate its frame cell.
    The vertices on the north and east boundaries are less constrained.
    Correspondingly, we for each gadget $\dG$ associate a set of \emph{constrained} cardinal~directions for each vertex.
    
    \begin{figure}[b]
        \centering
        \includegraphics[page=1]{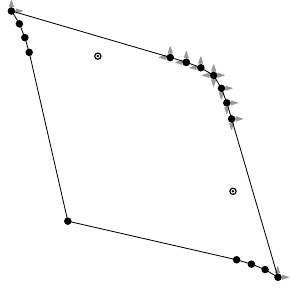}\hspace{3em}
        \includegraphics[page=2]{unconstrained}
        \caption{A frame cell with arrows indicating the unconstrained (left) and constrained (right) cardinal directions of each vertex.}
        \label{fig:relevant_directions}
    \end{figure}
    
    \begin{definition}[Constrained Directions for Vertices of $V_\dG$]
        See also Figure~\ref{fig:relevant_directions}.
        We say that a cardinal direction is \emph{constrained} for a vertex $v$ if it is not unconstrained for $v$, and define the constrained directions via the unconstrained directions.
        \begin{itemize}
            \item The north direction is unconstrained for the vertices $v_0$, $\vb_1$, $\vb_2$, $\vb_3$, $\vb_4$, and $v_4$ of $V_\dG$; 
            that is, the vertices on its north boundary, as well as the south east corner.
            \item The east direction is unconstrained for the vertices $v_0$, $v_1$, $v_2$, $v_3$, $v_4$, and $\vb_4$ of $V_\dG$; 
            that is, the vertices on its east boundary, as well as the north west corner.
            \item The south direction is unconstrained for the vertices $v_0$, $v_1$, $v_2$, and $v_3$ of $V_\dG$.
            \item The west direction is unconstrained for the vertices $v_0$, $\vb_1$, $\vb_2$, and $\vb_3$ of $V_\dG$.
        \end{itemize}
    \end{definition}

    We now define what it means for a vertex to match a local coloring.

    \begin{definition}[Matching a Local Coloring]
        Consider a triangulation of a gadget $\dG$.
        We say that a vertex $v\in V_\dG$ \emph{matches a local coloring $\chi$ in a cardinal direction} if its degree in that direction equals that of $U_\dG^\chi$.
    \end{definition}

    Using this terminology, we define the valid triangulations and correctness of a gadget~$\dG$.
    
    \begin{definition}[Valid triangulation of $\dG$]
    \label{def:valid_triangulation}
        A triangulation of the gadget $\dG$ is \emph{valid} if
        \begin{enumerate}
            \item for every vertex $v\in V_\dG$, there exists a local coloring $\chi$ such that $v$ matches $\chi$ in all its constrained directions;\label{cond:valid_vertex}
            \item there exists a local coloring $\chi_\dW$ such that the vertices on the west boundary of $\dG$ ($\vb_4$, $\vb_5$, $\vb_6$, $\vb_7$, and $v_8$) match $\chi_\dW$ in their constrained directions, and;\label{cond:valid_west}
            \item there exists a local coloring $\chi_\dS$ such that the vertices on the south boundary of $\dG$ ($v_4$, $v_5$, $v_6$, $v_7$, and $v_8$) match $\chi_\dS$ in their constrained directions.\label{cond:valid_south}
        \end{enumerate}
    \end{definition}

    \begin{definition}[Correctness of a Frame Cell Gadget]
    \label{def:correct_gadget}
        We say that a gadget $(V_\dG,R_\dG)$ is \emph{correct} if its valid triangulations are exactly the intended triangulations in $R_\dG$.
    \end{definition}

    In Section~\ref{sec:gadget-correctness}, we show that the gadget introduced for each tile type is correct.
    However, we first assume our gadgets are correct, and show in Section~\ref{sec:induction} how this lets us determine the number of cardinal signature realizations.

\subsection{Inductive Argument}\label{sec:induction}
        Assuming that each gadget is correct as specified by Definition~\ref{def:correct_gadget}, we now show that $(V,\sigma)$ does not admit cardinal signature realizations other than those that uniquely correspond to noncrossing tile selections.
        Lemma~\ref{lem:counting-noncrossing-cycle-sets} then implies that the number of noncrossing cycle-sets of $X$ equals the number of cardinal signature realizations of $(V,\sigma)$.
        
        Let $T$ be an arbitrary cardinal signature realization of $\sigma$.
        By Lemma~\ref{lem:forceframe2}, $T$ is a maximal PSL triangulation that contains a frame graph, 
        so it makes sense to discuss the triangulation of specific frame cells and corresponding gadgets.
        We denote by $\dG_{i,j}$ the gadget in column $i$ from the left and row $j$ from the bottom, and by $T_{i,j}\subseteq T$ its triangulation in $T$.

        Our goal is to show that each $T_{i,j}$ is a valid triangulation of $\dG_{i,j}$ under Definition~\ref{def:valid_triangulation}.
        If we assume that all gadgets are correct, this means that $T_{i,j}$ will be an intended realization of~$\dG_{i,j}$.
        To reason about the vertices on the south and west boundaries of $\dG_{i,j}$, 
        we will assume that for all gadgets $\dG_{i',j'}$ in the south-west directions (with $i'\leq i$ and $j'\leq j$) that $T_{i',j'}$ is an intended triangulation.
        This assumption will be justified via a north-eastward induction on frame cells.
        To show validity of $T_{i,j}$, we will determine matching local colorings for each vertex of $\dG_{i,j}$.
        We start with vertices that do not rely on induction.
        
        \begin{lemma}[Matching Local Coloring for Vertices not on the South or West Boundary]
        \label{lem:interiornorthandeast}
            Let~$T$ be a cardinal signature realization of $\sigma$ and $T_{i,j}\subseteq T$ be its triangulation of gadget $\dG_{i,j}$.
            Let~$v\in V_{\dG_{i,j}}$ be a vertex that does not lie on the south or west boundary of $\dG_{i,j}$.
            Let $\chi$ be any local coloring.
            Then in $T_{i,j}$, vertex $v$ matches $\chi$ in each constrained direction $\mathcal{D}$ of $v$.
        \end{lemma}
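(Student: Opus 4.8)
The plan is to reduce the claim to two ingredients: first, that the $\mathcal{D}$-degree of $v$ in $U_\dG^\chi$ is the same for every local coloring $\chi$; and second, that the $\mathcal{D}$-degree of $v$ inside $T_{i,j}$ is forced to equal this common value. Throughout I would use Lemma~\ref{lem:forceframe2}: since $T$ contains the entire frame graph, the convex chains $\Tilde{\alpha}'_i$ and $\Tilde{\beta}'_j$ bounding $\dG_{i,j}$ are present in $T$, and because $T$ is a PSL graph no edge may cross them; this confines every edge at $v$ to the frame cells that contain $v$.

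For the first ingredient, I would observe that, by the construction of $\sigma$, at each vertex and in each direction $U_\dG$ carries equally many light-red as dark-red edges and equally many light-blue as dark-blue edges, with at most one of each shade. A local coloring selects one red and one blue shade, so the $\mathcal{D}$-degree of $v$ in $U_\dG^\chi$ counts the same number of red edges (zero or one) and blue edges (zero or one) no matter which shades $\chi$ names, together with the common black edges; call this value $d_\mathcal{D}(v)$. The identical count shows that $d_\mathcal{D}(v)$ equals the signature degree $\deg_\mathcal{D}(v)$ precisely when all of $v$'s $\mathcal{D}$-edges in the graph defining $\sigma$ come from cell $(i,j)$.

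It therefore suffices to establish a \emph{no-leak} claim, for both $T$ and the defining graph of $\sigma$: every edge at $v$ in a constrained direction $\mathcal{D}$ lies on the boundary of $\dG_{i,j}$ or strictly inside it, and none lies strictly inside another cell containing $v$. Granting this, the $\mathcal{D}$-edges of $v$ are exactly those recorded by $T_{i,j}$, so the $\mathcal{D}$-degree of $v$ in $T_{i,j}$ equals its $\mathcal{D}$-degree in $T$, which equals $\deg_\mathcal{D}(v)$ since $T$ realizes $\sigma$; the same claim in the defining graph gives $\deg_\mathcal{D}(v)=d_\mathcal{D}(v)$, and chaining the equalities proves the lemma. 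I would prove no-leak by cases. An interior vertex $p$ or $q$ lies in only $\dG_{i,j}$, so every incident edge is in-cell and all four constrained directions follow at once; the corner $v_0$ has no constrained direction and is vacuous. Each remaining vertex---$v_1,v_2,v_3$ on the east boundary (constrained north and west) and $\vb_1,\vb_2,\vb_3$ on the north boundary (constrained east and south)---lies in exactly two cells, and for one constrained direction (west, resp.\ south) the direction points away from the second cell, so nothing can enter its interior while the forced side chains block escape into the opposite neighbor.

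The hard part will be the remaining constrained direction---north for $v_1,v_2,v_3$ and east for $\vb_1,\vb_2,\vb_3$---which points toward the second cell. There $v_1,v_2,v_3$ appear as west-boundary vertices $\vb_5,\vb_6,\vb_7$ and $\vb_1,\vb_2,\vb_3$ as south-boundary vertices $v_5,v_6,v_7$. Reading off the coordinate orders of Construction~\ref{cons:framecell}, the only vertices of that cell lying to the north of a west-boundary vertex (resp.\ to the east of a south-boundary vertex) are corners of the same frame chain, and the segments joining them are chords of the convex chain $\Tilde{\alpha}'$ (resp.\ $\Tilde{\beta}'$). I would then invoke the convexity and forbidden-region conditions of Construction~\ref{cons:framegraph} to argue that such chords fall on the far side of the chain, i.e., outside the second cell, so that cell contains no interior $\mathcal{D}$-edge of $v$; this is exactly the no-leak property needed. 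I expect the real obstacle to be the geometric bookkeeping here: tracking which way each chain bulges (and hence on which side its chords lie), handling boundary cells of the grid where the second cell is absent, and checking that shared chain edges such as $v_1v_0$ are counted consistently by $T_{i,j}$ and by the signature rather than double-counted. By contrast the $\chi$-independence and the interior and corner cases should be routine.
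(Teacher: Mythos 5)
Your proposal is correct and follows essentially the same route as the paper's proof: establish that every constrained-direction edge at $v$ must stay within the frame cell of $\dG_{i,j}$ (the paper argues this in one stroke --- leaving $V_{\dG_{i,j}}$ forces a crossing with the forced frame graph, and an edge between two cell vertices lying outside the cell must be a chord of the south or west boundary, which cannot involve $v$), and then observe that the cardinal degree of $v$ in $U_\dG^\chi$ is independent of $\chi$ and equals the degree prescribed by $\sigma$. Your vertex-by-vertex ``no-leak'' case analysis is a finer-grained version of that same argument (and usefully makes explicit that the no-leak property is also needed in the graph defining $\sigma$), so no change of approach is involved.
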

        \begin{proof}
            We claim that for any edge $e=\{v,u\}$ of $T$ incident to $v$ in the constrained direction~$\mathcal{D}$,
            the other end point $u$ must lie in $V_{\dG_{i,j}}$.
            Indeed, to connect to any vertex not in $V_{\dG_{i,j}}$, the edge~$e$ would have to cross the frame graph, resulting in a non-plane graph.
            Because $T$ is plane, $u$ must therefore lie in $V_{\dG_{i,j}}$.
            For $e$ to lie outside the frame cell of $\dG_{i,j}$, it needs to connect two vertices on its south or west boundary, but $v$ does not lie on the south or west boundary.
            Therefore, any edge $e$ incident to $v$ in direction $\mathcal{D}$ lies in $T_{i,j}$.
            Thus, the cardinal signature of~$T_{i,j}$, which we denote $\sigma_{i,j}$, agrees with $\sigma$ on these cardinal degrees of $v$ in direction~$\mathcal{D}$.
            
            The cardinal degree of $v$ in~$U_\dG^\chi$ is invariant under the choice of the local coloring $\chi$, 
            and equals the cardinal degree prescribed by $\sigma_{i,j}$.
        \end{proof}

        Next, we consider vertices on the west boundary and show that in $T_{i,j}$, the cardinal degrees match the local coloring of the frame cell to the west (or any local coloring if no such cell exists).
        Lemma~\ref{lem:west} establishes Condition~\ref{cond:valid_west} by reasoning about the ``leftover'' degree of a vertex that a realization must use inside a frame cell based on how many incident edges are already used in frame cells to the south and west.
        
        \begin{lemma}[Condition~\ref{cond:valid_west} Assuming Intended Triangulations to the South-West]
        \label{lem:west}
            Let~$T$ be a cardinal signature realization of $\sigma$ and $T_{i,j}\subseteq T$ be its triangulation of gadget $\dG_{i,j}$.
            Fix some~$i$ and $j$ and assume for all $(i',j')\neq (i,j)$ with $i'\leq i$ and $j'\leq j$, that there exists a local coloring $\chi_{i',j'}$ such that $T_{i',j'}=U_{\dG_{i',j'}}^{\chi_{i',j'}}$.

            If $i=0$, let $\chi_\dW$ be any local coloring.
            If $i>0$, let $\chi_\dW=\chi_{i-1,j}$.
            Let $v$ be any vertex on the west boundary of $\dG_{i,j}$.
            In $T_{i,j}$, vertex $v$ matches $\chi_\dW$ in each constrained direction~$\mathcal{D}$ of~$v$.
        \end{lemma}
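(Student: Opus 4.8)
The plan is to compute, for each west-boundary vertex $v$ and each constrained direction $\mathcal{D}$, the degree of $v$ inside $T_{i,j}$ as a \emph{leftover}: the total $\deg_\mathcal{D}(v)$ fixed by $\sigma$, minus the edges that direction $\mathcal{D}$ routes into neighbouring frame cells, which the induction hypothesis has already pinned down. Exactly as in the proof of Lemma~\ref{lem:interiornorthandeast}, planarity keeps every edge of $T$ at $v$ inside one of the (at most four) frame cells incident to $v$. The constrained directions are arranged so that, besides $\dG_{i,j}$ itself, the cells receiving $\mathcal{D}$-edges of $v$ lie weakly to the south-west of $\dG_{i,j}$ — where $T_{i',j'}=U^{\chi_{i',j'}}_{\dG_{i',j'}}$ is known — with the sole degenerate exception of a west edge of $\vb_4$, treated directly below.

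To compare these leftovers with $\chi_\dW$, I introduce the constant map $\varphi^\star$ sending every tile to $\chi_\dW$. Such a map meets Constraints~\ref{constr:phi-red-signal}--\ref{constr:phi-noncrossing} (as noted before Lemma~\ref{lem:forceframe2}), so by Observation~\ref{obs:phi} the graph $F(\varphi^\star)$ is a realization of $\sigma$, and hence $v$ has the same degree $\deg_\mathcal{D}(v)$ in $T$ and in $F(\varphi^\star)$. Decomposing this common value as the contribution of $\dG_{i,j}$ plus those of the neighbouring cells, I subtract the neighbour contributions: each neighbour that realizes $\chi_\dW$ in $T$ contributes identically in $F(\varphi^\star)$, so whenever all neighbours fall into this case the contribution of $\dG_{i,j}$ in $T$ must equal its contribution in $U^{\chi_\dW}_{\dG_{i,j}}$ — precisely the statement that $v$ matches $\chi_\dW$ in direction $\mathcal{D}$. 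For the non-corner vertices $\vb_5,\vb_6,\vb_7$, whose only neighbour is the west cell $\dG_{i-1,j}=U^{\chi_\dW}_{\dG_{i-1,j}}$, this closes in every direction. When $i=0$ there is no west cell, the west edges become forced cone edges, and the same comparison applies with $\chi_\dW$ arbitrary, since the degrees then depend only on $\dG_{i,j}$.

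The north-west corner $\vb_4$ is the west-most vertex of its cell, so it has no within-cell west edge and its leftover west-degree is $0$, which is also its west-degree in $U^{\chi_\dW}_{\dG}$; its only other constrained direction, south, sends edges solely into $\dG_{i,j}$ and the west cell, so the comparison again closes. The south-west corner $v_8$ is the remaining, and I expect the most delicate, case: it is shared by four frame cells, and in several constrained directions its edges also enter the south cell $\dG_{i,j-1}$ and the diagonal cell $\dG_{i-1,j-1}$, whose colorings $\chi_{i,j-1}$ and $\chi_{i-1,j-1}$ need not equal $\chi_\dW$. As these cells are not edge-adjacent to the west cell, no signal Constraint equates their colorings with $\chi_\dW$, and the cancellation above does not close for free.

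The key to resolving this — and the one ingredient that genuinely needs the explicit gadgets of Figure~\ref{fig:grid_gadgets} — is the observation that no selectable (colored) edge of any gadget is incident to a frame-cell corner: wires cross cell boundaries only at the subdivided mid-edge vertices such as $\vb_5,\vb_6,\vb_7$, never at the corners. Consequently the degree of a corner vertex inside any gadget is independent of the local coloring. Applying this to $v_8$, viewed as a corner of each of its four incident cells, the contributions of the south and diagonal cells to its degrees are the same whether computed with their actual colorings or with $\chi_\dW$, so the comparison with $F(\varphi^\star)$ closes and $v_8$ matches $\chi_\dW$. This same corner-independence is what allows $v_8$ to simultaneously satisfy the symmetric south-boundary statement for $\chi_\dS$.
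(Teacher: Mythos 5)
Your proof is correct and is essentially the paper's argument: both proceed by degree bookkeeping at $v$ in a constrained direction, comparing $T$ against a reference graph with the prescribed cardinal degrees and cancelling the contributions of the south-west cells fixed by the induction hypothesis. The only difference is cosmetic: the paper compares against a hybrid graph $H=F\cup U_{\dG_{i,j}}^{\chi_\dW}\cup\bigcup T_{i',j'}$ and folds everything into ``by inspection of the gadgets,'' whereas you compare against the genuine realization $F(\varphi^\star)$ and therefore must make explicit the corner-independence observation (no colored edge meets a frame-cell corner, so a corner's in-gadget degrees do not depend on the local coloring) to handle $v_8$'s south and diagonal neighbours — an observation that is indeed what the paper's inspection step relies on, and your separate treatment of $\vb_4$ and the $i=0$ case is if anything more careful than the paper's.
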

        \begin{proof}
            In $U_{\dG_{i,j}}^{\chi_\dW}$, vertex $v$ matches $\chi_\dW$ in direction $\mathcal{D}$ by definition.
            We need to show that also in $T_{i,j}$, vertex $v$ matches $\chi_\dW$ in direction $\mathcal{D}$.
            Each edge of $T$ incident to $v$ in direction~$\mathcal{D}$ lies in $F$, or $T_{i,j}$, or a triangulation $T_{i',j'}$ with $(i',j')\neq(i,j)$ and $i'\leq i$ and $j'\leq j$.
            Let \[H=F\cup U_{\dG_{i,j}}^{\chi_\dW}\cup\bigcup \{T_{i',j'} \mid \text{$(i',j')\neq(i,j)$ and $i'\leq i$ and $j'\leq j$}\}.\]
            
            By inspection of the gadgets constructed via Figure~\ref{fig:grid_gadgets}, observe that in $H$, the cardinal degree of $v$ in direction $\mathcal{D}$ is as prescribed by $\sigma$.
            Because $T$ is a realization of $\sigma$, the vertex $v$ in direction $\mathcal{D}$ has the same cardinal degrees in $T$ as it does in $H$.
            Therefore, the cardinal degree of $v$ in direction $\mathcal{D}$ in $T_{i,j}$ is the same as that in $U_{\dG_{i,j}}^{\chi_\dW}$.
            Thus, also in $T_{i,j}$, vertex $v$ matches $\chi_\dW$ in direction $\mathcal{D}$.
        \end{proof}

        Lemma~\ref{lem:south} analogously concerns Condition~\ref{cond:valid_south} and can be proven symmetrically.
        
        \begin{lemma}[Condition~\ref{cond:valid_south} Assuming Intended Triangulations to the South-West]
        \label{lem:south}
            Let~$T$ be a cardinal signature realization of $\sigma$ and $T_{i,j}\subseteq T$ be its triangulation of gadget $\dG_{i,j}$.
            Fix some~$i$ and $j$ and assume for all $(i',j')\neq (i,j)$ with $i'\leq i$ and $j'\leq j$, that there exists a local coloring $\chi_{i',j'}$ such that $T_{i',j'}=U_{\dG_{i',j'}}^{\chi_{i',j'}}$.

            If $i=0$, let $\chi_\dS$ be any local coloring.
            If $i>0$, let $\chi_\dS=\chi_{i,j-1}$.
            Let $v$ be any vertex on the south boundary of $\dG_{i,j}$.
            In $T_{i,j}$, vertex $v$ matches $\chi_\dS$ in each constrained direction~$\mathcal{D}$ of~$v$.
        \end{lemma}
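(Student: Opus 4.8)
The plan is to prove Lemma~\ref{lem:south} by mirroring the proof of Lemma~\ref{lem:west} under the reflection across the diagonal that exchanges the $x$- and $y$-coordinates. This reflection sends the west boundary $\vb_4,\vb_5,\vb_6,\vb_7,v_8$ to the south boundary $v_4,v_5,v_6,v_7,v_8$ (mapping each $\vb_k$ to $v_k$ and $p$ to $q$), exchanges north with east and south with west, and sends the western neighbor $\dG_{i-1,j}$ to the southern neighbor $\dG_{i,j-1}$. One checks that constrained directions are preserved: $\vb_4$ with constrained set $\{\dS,\dW\}$ corresponds to $v_4$ with constrained set $\{\dS,\dW\}$, the shared corner $v_8$ and the fully constrained vertices map to their analogues, so the hypotheses and conclusion of Lemma~\ref{lem:south} become exactly those of Lemma~\ref{lem:west}. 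It therefore suffices to observe that the whole construction is invariant under this reflection up to the simultaneous exchange of colors and shades noted in the caption of Figure~\ref{fig:grid_gadgets}; the frame-graph placement conditions of Construction~\ref{cons:framegraph} are already stated symmetrically for the $\tilde\alpha$- and $\tilde\beta$-paths, and the cone vertex $c$ lies symmetrically to the south and west.

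Carried out directly rather than by appeal to symmetry, the argument runs as follows. Fix $i,j$ and a south-boundary vertex $v\in\{v_4,v_5,v_6,v_7,v_8\}$ together with a constrained direction $\mathcal{D}$. First I would establish the barrier claim, as in Lemma~\ref{lem:interiornorthandeast}: every edge of $T$ incident to $v$ in direction $\mathcal{D}$ lies in $F$, in $T_{i,j}$, or in some $T_{i',j'}$ with $(i',j')\neq(i,j)$, $i'\leq i$, and $j'\leq j$. Any edge leaving the closed south-west region would cross the frame graph, contradicting planarity of $T$; and since $v$ lies on the south boundary, an edge of $T_{i,j}$ can escape its frame cell only through the south or west boundary, so it lands in a southern or south-western cell. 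I would then set
\[
H = F \cup U_{\dG_{i,j}}^{\chi_\dS} \cup \bigcup\{T_{i',j'} \mid (i',j')\neq(i,j),\ i'\leq i,\ j'\leq j\},
\]
where $\chi_\dS=\chi_{i,j-1}$ records the (assumed intended) coloring of the southern neighbor, or is an arbitrary local coloring in the base case where no southern neighbor exists.

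The crux is the inspection step: in $H$, the cardinal degree of $v$ in direction $\mathcal{D}$ equals the value prescribed by $\sigma$. This is where the specific gadget geometry enters. The key invariant is that every colored edge occurs in a matched light/dark pair and that $\sigma$ assigns each vertex its black-edge count plus half its colored-edge count in each direction; choosing a single consistent shade $\chi_\dS$ across the south-western triangulations makes exactly one edge of each relevant pair contribute, so the black edges of $F$ together with the colored edges selected by $\chi_\dS$ and by the fixed south-west colorings sum to the $\sigma$-prescribed degree. Once this is verified, I conclude exactly as in Lemma~\ref{lem:west}: since $T$ realizes $\sigma$, vertex $v$ has the same degree in direction $\mathcal{D}$ in $T$ as in $H$, hence its degree in $T_{i,j}$ matches that in $U_{\dG_{i,j}}^{\chi_\dS}$, i.e.\ $v$ matches $\chi_\dS$ in direction $\mathcal{D}$. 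I expect the main obstacle to be precisely this inspection step for the corner vertex $v_8$ and the fully constrained vertices $v_5,v_6,v_7$, whose constrained set comprises all four directions: there the degree bookkeeping must be checked simultaneously across directions and across the boundaries shared with both the southern and the south-western cells, rather than reusing the lighter corner analysis that sufficed on the west boundary.
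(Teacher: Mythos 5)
Your proposal is correct and matches the paper exactly: the paper gives no separate proof of Lemma~\ref{lem:south}, stating only that it ``can be proven symmetrically'' to Lemma~\ref{lem:west}, and your direct argument (barrier claim, the auxiliary graph $H$ with $\chi_\dS$ in place of $\chi_\dW$, and the degree comparison against $\sigma$) is precisely that symmetric transcription. Your closing remark about checking the inspection step carefully for the fully constrained south-boundary vertices is a reasonable caution but introduces no new difficulty beyond what the west-boundary case already handles.
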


        Lemmas~\ref{lem:interiornorthandeast}--\ref{lem:south} together establish Conditions~\ref{cond:valid_vertex}--\ref{cond:valid_south} for a gadget assuming that all gadgets to its south and west use intended triangulations.
        Lemma~\ref{lem:induction} follows by induction, and we deduce that every realization of $\sigma$ uses an intended realization for each gadget.
    \begin{restatable}{lemma}{induction}
        \label{lem:induction}
            If all gadgets are correct, then all cardinal signature realizations of $\sigma$ lie in~$\mathcal{R}$.
        \end{restatable}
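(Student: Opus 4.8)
The plan is to prove the lemma by a north-eastward induction on the frame cells of $T$, using gadget correctness together with Lemmas~\ref{lem:interiornorthandeast},~\ref{lem:west}, and~\ref{lem:south} as the three ingredients that establish the three conditions of Definition~\ref{def:valid_triangulation}. Let $T$ be an arbitrary realization of $\sigma$. By Lemma~\ref{lem:forceframe2}, $T$ contains the frame graph, so each gadget $\dG_{i,j}$ has a well-defined induced triangulation $T_{i,j}\subseteq T$. It then suffices to show that every $T_{i,j}$ is an intended triangulation of $\dG_{i,j}$: once this is known, $T$ contains an intended realization of every gadget, so $T\in\mathcal{R}$ by definition of $\mathcal{R}$, and since $T$ was arbitrary, all realizations of $\sigma$ lie in $\mathcal{R}$.

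First I would fix a linear extension of the product partial order in which $(i',j')\preceq(i,j)$ exactly when $i'\le i$ \text{ and } $j'\le j$ — for instance, processing cells in order of increasing $i+j$ with ties broken arbitrarily — and induct along this order. The inductive hypothesis at cell $(i,j)$ is chosen to be precisely the premise shared by Lemmas~\ref{lem:west} and~\ref{lem:south}: for every $(i',j')\neq(i,j)$ with $i'\le i$ and $j'\le j$, there exists a local coloring $\chi_{i',j'}$ with $T_{i',j'}=U_{\dG_{i',j'}}^{\chi_{i',j'}}$. Phrasing the hypothesis this way is exactly what makes the three boundary lemmas applicable verbatim at the inductive step, since all cells in the closed south-west quadrant of $(i,j)$ precede $(i,j)$ in the chosen order.

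For the inductive step, by correctness of $\dG_{i,j}$ (Definition~\ref{def:correct_gadget}) it is enough to verify that $T_{i,j}$ is a \emph{valid} triangulation, and I would check the three conditions by partitioning $V_{\dG_{i,j}}$ into the (overlapping) categories handled by the boundary lemmas: every vertex lies on the south boundary, on the west boundary, or on neither. Condition~\ref{cond:valid_vertex} then follows because Lemma~\ref{lem:interiornorthandeast} supplies a matching local coloring in all constrained directions for vertices of the last type (including the interior vertices $p$ and $q$), while Lemma~\ref{lem:west} supplies the common coloring $\chi_\dW$ for the west boundary and Lemma~\ref{lem:south} supplies $\chi_\dS$ for the south boundary. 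Conditions~\ref{cond:valid_west} and~\ref{cond:valid_south} are then literally the conclusions of Lemmas~\ref{lem:west} and~\ref{lem:south}. Hence $T_{i,j}$ is valid, and by correctness it is intended, closing the induction.

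The main obstacle here is not any single computation — the deep content, gadget correctness, is assumed — but rather the bookkeeping of setting up the induction so that the hypotheses of Lemmas~\ref{lem:west} and~\ref{lem:south} are exactly available before cell $(i,j)$ is processed. The one place to be careful is the southwest corner $v_8$, which lies on both the south and west boundaries and so is simultaneously covered by Lemma~\ref{lem:west} (with $\chi_\dW$) and Lemma~\ref{lem:south} (with $\chi_\dS$); I would note that the two lemmas force these colorings to agree on the constrained degrees of $v_8$, so either one witnesses Condition~\ref{cond:valid_vertex} for $v_8$ and no inconsistency arises. With the quadrant ordering in place, the remainder is simply recording that the three lemmas cover every vertex of $V_{\dG_{i,j}}$ and correspond one-to-one to the three validity conditions.
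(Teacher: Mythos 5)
Your proposal is correct and follows essentially the same route as the paper's own proof: a strong north-eastward induction over frame cells whose hypothesis is exactly the shared premise of Lemmas~\ref{lem:west} and~\ref{lem:south}, with Lemma~\ref{lem:interiornorthandeast} covering the remaining vertices so that validity, and hence (by gadget correctness) intendedness, of $T_{i,j}$ follows. Your extra care about the linear extension of the quadrant order and the doubly-covered corner vertex $v_8$ only makes explicit what the paper leaves implicit.
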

        \begin{proof}
            Let $T$ be a cardinal signature realization of $\sigma$.
            Let $\dG_{i,j}$ denote the gadget in column~$i$ from the left and row $j$ from the bottom, and let~$T_{i,j}$ denote its triangulation in~$T$.
            To establish that $T$ lies in $\mathcal{R}$, it remains to show that each $T_{i,j}$ is an intended realization of $\dG_{i,j}$.

            We use strong induction, fixing some $i$ and $j$ and assuming, for all $(i',j')\neq(i,j)$ with~$i'\leq i$ and $j'\leq j$, 
            that $T$ triangulates $\dG_{i',j'}$ with one of its intended realizations.
            It suffices to prove that $T$ triangulates $\dG_{i,j}$ using one of its intended realizations.
            By Lemmas~\ref{lem:west} and~\ref{lem:south} respectively, $T_{i,j}$ satisfies Conditions~\ref{cond:valid_west} and~\ref{cond:valid_south} of Definition~\ref{def:valid_triangulation}.
            Combining this with Lemma~\ref{lem:interiornorthandeast},~$T_{i,j}$ also satisfies the remaining Condition~\ref{cond:valid_vertex} of Definition~\ref{def:valid_triangulation}, so $T_{i,j}$ is valid.
            Because we assumed that $\dG_{i,j}$ is correct, $T_{i,j}$ is an intended realization of~$\dG_{i,j}$.
        \end{proof}

\subsection{Tools for Correctness of Gadgets}
    The proofs of correctness for our gadgets utilize various arguments, but
    several arguments appear repeatedly. We establish these most frequently used
    arguments in this section. 
    
    \begin{lemma}[Convex Path Argument]
        \label{lem:convexPath}
        Let $G$ be a PSL triangulation containing
        the cycle $C={v_1, v_2, \ldots v_n}$.
        Furthermore, suppose that for all $2 \leq i< j \leq n$, the segment
        $v_iv_j$ lies outside of $C$.
        Then $G$ contains the edges $v_1 v_i$ for all $2 \leq i \leq n$ (see Figure~\ref{fig:convex-path-argument}).
        
        \begin{figure}[h]
            \centering
            \includegraphics{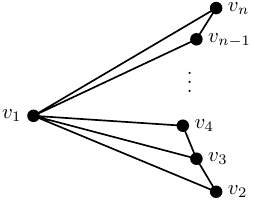}
            \caption{Illustration of the framework for Lemma~\ref{lem:convexPath}.}
            \label{fig:convex-path-argument}
        \end{figure}
    \end{lemma}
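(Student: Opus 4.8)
The plan is to analyze how $G$ triangulates the closed region $R$ bounded by the cycle $C$, and to show that the hypothesis leaves no admissible diagonal except those emanating from $v_1$, which then forces the entire fan. Since $G$ is a PSL triangulation and $C$ is a cycle of $G$, every bounded face of $G$ lying inside $C$ is a triangle, and by planarity no edge of $G$ crosses $C$; hence the vertices and edges of $G$ contained in $R$ form a triangulation of the simple polygon bounded by $C$. In the settings where this lemma is invoked, the interior of $C$ contains no vertex of $G$ other than $v_1,\dots,v_n$ (this is precisely what makes the fan unique, and it holds by inspection of the relevant cycles, e.g.\ via the vertex-free orange regions of Construction~\ref{cons:framegraph}); I would state this explicitly. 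Granting it, the triangulation of $R$ consists of the $n$ boundary edges of $C$ together with exactly $n-3$ interior diagonals, each a segment joining two nonadjacent vertices of $C$ and lying in the interior of $C$.

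The key step is to observe that the hypothesis forbids every diagonal not incident to $v_1$: for nonadjacent $v_i,v_j$ with $2\le i<j\le n$, the segment $v_iv_j$ lies outside $C$ (the adjacent pairs being boundary edges), so it cannot serve as an interior diagonal. Consequently each of the $n-3$ diagonals must join $v_1$ to one of $v_3,v_4,\dots,v_{n-1}$, the vertices nonadjacent to $v_1$ on $C$. There are exactly $n-3$ such candidate segments, so a counting (pigeonhole) argument forces all of them to appear. Together with the boundary edges $v_1v_2$ and $v_1v_n$, this yields $v_1v_i\in G$ for every $2\le i\le n$, as claimed; the degenerate case $n=3$ requires no diagonals and holds trivially.

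I expect the main obstacle to be the empty-interior condition rather than the combinatorics. Without it the statement is false: a vertex placed in the kernel of the polygon could be joined to all of $v_1,\dots,v_n$, triangulating $R$ with no $v_1$-diagonal at all. Thus the proof must lean on the absence of interior vertices for the specific cycles to which the lemma is applied, after which the remaining work is only the bookkeeping above. An essentially equivalent alternative is an induction that repeatedly cuts the ear at $v_n$: since $v_n$ cannot connect to any of $v_2,\dots,v_{n-2}$ (those segments lie outside $C$) and there are no interior vertices, the unique triangle on the inside of edge $v_{n-1}v_n$ is $v_{n-1}v_nv_1$, which forces $v_1v_{n-1}$ and reduces the claim to the smaller cycle $v_1,v_2,\dots,v_{n-1}$, whose hypothesis is inherited because the exterior of the reduced cycle contains that of $C$.
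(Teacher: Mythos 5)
Your counting argument is exactly the paper's proof: with no interior vertices, the polygon bounded by $C$ needs $n-3$ diagonals, the hypothesis excludes every diagonal not incident to $v_1$, and there are exactly $n-3$ candidates $v_1v_i$ with $2<i<n$, so pigeonhole forces all of them. You are also right that the empty-interior condition is an implicit hypothesis --- the paper's proof opens with ``Because $C$ contains no interior vertices'' even though the lemma statement does not include it --- so your suggestion to state it explicitly is well taken.
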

    
    \begin{proof}
         Because $C$ contains no interior vertices, any triangulation of $C$ has $n-3$ interior edges.
         Since segments $v_iv_j$ lie outside of $C$, all edges used to triangulate the interior of~$C$ are of the form $v_1v_i$ for $2 < i < n$.
         There are exactly $n-3$ such edges, so $G$ contains them~all.
    \end{proof}
    
    \begin{lemma}[Ear Clipping Argument]
        \label{lem:ear}
       Let $G$ be a PSL triangulation containing edges $v_1v_2$ and $v_2v_3$. 
       Suppose that the triangle defined by $v_1,v_2$, and $v_3$ does not contain vertices in its interior.
       Suppose that $v_2$ is not adjacent to any edges that pass through $v_1v_3$.
       Then $G$ contains the edge $v_1v_3$.
    \end{lemma}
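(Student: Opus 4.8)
The plan is to argue that the two given edges $v_1v_2$ and $v_2v_3$ are \emph{rotationally consecutive} around $v_2$ on the side of the triangle $\triangle v_1v_2v_3$, and then to exploit that $G$ is a triangulation to identify the face wedged between them as $\triangle v_1v_2v_3$ itself, whose third side is the desired edge $v_1v_3$. Concretely, let $W$ be the open angular wedge at $v_2$ bounded by the rays $v_2\to v_1$ and $v_2\to v_3$ that contains the interior of $\triangle v_1v_2v_3$; this wedge has opening angle strictly less than $\pi$ because $v_1,v_2,v_3$ are in general position and hence not collinear. The key claim is that no edge of $G$ incident to $v_2$ enters $W$.

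First I would establish this claim by contradiction. Suppose some edge $v_2w\in E(G)$ has $w$ lying in the open wedge direction $W$. The point $w$ cannot lie in the interior of $\triangle v_1v_2v_3$, since by hypothesis the triangle contains no vertices in its interior; it cannot lie on the open segment $v_1v_3$, since that would make $v_1,w,v_3$ collinear, contradicting general position; and $w\neq v_1,v_3$ since it lies strictly inside the wedge. The ray from $v_2$ in the direction of $w$ immediately enters the convex, bounded triangle and must leave it through the only side not incident to $v_2$, namely the open segment $v_1v_3$. Since $w$ is not inside the triangle nor on $v_1v_3$, it lies beyond this exit, so the segment $v_2w$ contains the crossing point and thus passes through $v_1v_3$. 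This contradicts the assumption that $v_2$ is incident to no edge passing through $v_1v_3$, proving the claim.

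Given the claim, $v_2v_1$ and $v_2v_3$ are consecutive in the cyclic order of edges around $v_2$ on the side of $W$, so there is a single face $f$ of $G$ incident to $v_2$ whose two boundary edges at $v_2$ are exactly $v_2v_1$ and $v_2v_3$ and whose interior meets $W$. Choosing a point $x$ in the interior of $\triangle v_1v_2v_3$ close to $v_2$, we have $x\in f$; since $x$ lies in the convex hull of $\{v_1,v_2,v_3\}$ and hence inside $\hull(V)$, the face $f$ is bounded rather than the outer face. As $G$ is a PSL triangulation, every bounded face is a triangle, so $f$ is a triangle whose two edges at $v_2$ reach $v_1$ and $v_3$; its third edge must therefore be $v_1v_3$, which is consequently present in $G$.

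The main obstacle I anticipate is the geometric bookkeeping in the first step: making precise that an edge $v_2w$ entering $W$ but ending outside the triangle is \emph{forced} to cross the opposite side $v_1v_3$ (rather than escaping through $v_1v_2$ or $v_2v_3$), together with cleanly ruling out the boundary and collinearity degeneracies via general position. Once the ``no edge enters the wedge'' claim is secured, identifying the bounded triangular face and reading off the edge $v_1v_3$ is routine, the only remaining care being the short convex-hull argument that $f$ is bounded.
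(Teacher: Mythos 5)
Your proof is correct and takes essentially the same route as the paper's (very terse) argument: both identify the triangular face of $G$ lying just inside the empty triangle $v_1v_2v_3$ and use the hypothesis that no edge at $v_2$ crosses $v_1v_3$ to force that face to be $\triangle v_1v_2v_3$ itself --- you organize this around the rotational order of edges at $v_2$, while the paper argues by contradiction via the face flanking the edge $v_1v_2$, but the ingredients and the face in question are the same. Your write-up is in fact more careful than the paper's two-line sketch (e.g.\ in justifying that the crossing point lies in the interior of the offending edge and that the face is bounded).
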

    
    \begin{proof}
       Suppose not. Without loss of generality, edge $v_1v_2$ must be part of
       two triangles that are not $v_1v_2v_3$, one of which forces an edge to
       $v_2$ that passes through $v_1v_3$, a contradiction.
    \end{proof}
    
    \begin{lemma}[Triangle Apex Argument]\label{lem:triangle-apex}
        Let $G$ be a PSL graph with vertex set $v_1, v_2, u_1, u_2, \ldots, u_k$.
        Suppose that for all $i<j$, the segments $v_1u_j$ and $v_2u_i$ cross.
        Suppose that $v_1$ and $v_2$ connect to at least $m_1$ and $m_2$ of the vertices $u_1,\dots,u_k$, respectively.
        Then $v_1$ connects to at least $m_1$ of the vertices $u_1,\dots,u_{k+1-m_2}$ and $v_2$ connects to at least $m_2$ of the vertices $u_{k+1-m_1},\dots,u_k$.
        In particular, if $m_1+m_2=k+1$, then the edges are completely determined.
    \end{lemma}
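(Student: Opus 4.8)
The plan is to convert the geometric non-crossing hypothesis into a purely combinatorial statement, namely that every $u_i$ adjacent to $v_1$ carries a smaller index than every $u_i$ adjacent to $v_2$, and then to read off both bounds by an elementary interval count.

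First I would isolate the only property of $G$ being a PSL graph that is needed: its edges are interior-disjoint segments, so no two of them cross. Let $A=\{i : v_1u_i \text{ is an edge of } G\}$ and $B=\{i : v_2u_i \text{ is an edge of } G\}$, so that $|A|\ge m_1$ and $|B|\ge m_2$ by hypothesis. The crossing assumption says that for $i<j$ the segments $v_1u_j$ and $v_2u_i$ cross; since $G$ is plane they cannot both be edges. Hence no pair $(i,j)$ satisfies $i\in B$, $j\in A$, and $i<j$, which is exactly the statement that every index in $A$ is at most every index in $B$, i.e. $\max A\le\min B$. I would stress that the inequality is non-strict: the edges $v_1u_i$ and $v_2u_i$ only share the endpoint $u_i$ and do not cross, so a common index is allowed.

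The two conclusions then follow by counting. Since $A$ is a set of at least $m_1$ distinct indices all lying in $\{1,\dots,\max A\}$, we get $\max A\ge m_1$; symmetrically, since $B$ has at least $m_2$ distinct indices all lying in $\{\min B,\dots,k\}$, we get $\min B\le k+1-m_2$. Substituting these into $\max A\le\min B$ yields $A\subseteq\{1,\dots,k+1-m_2\}$ and, by the mirror-image bound, $B\subseteq\{m_1,\dots,k\}$ (the top $k+1-m_1$ vertices). These are precisely the asserted inclusions: $v_1$ joins at least $m_1$ of $u_1,\dots,u_{k+1-m_2}$, and $v_2$ joins at least $m_2$ of the symmetric top range. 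For the determined case $m_1+m_2=k+1$, the interval $\{1,\dots,k+1-m_2\}$ has exactly $m_1$ elements, so it must equal $A$; likewise the top interval of size $m_2$ must equal $B$, whence every edge from $v_1$ or $v_2$ to the $u_i$ is forced.

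I do not anticipate a deep obstacle here; once the reformulation $\max A\le\min B$ is in place the remainder is bookkeeping. The two points that genuinely require care are (i) keeping the inequality non-strict so that a $u_i$ shared by both apices is treated correctly, and (ii) counting the interval sizes inclusively, since in the determined case the two forced intervals overlap in exactly that single shared index while together covering all of $u_1,\dots,u_k$. Verifying that the crossing hypothesis actually holds for a given geometric configuration of $v_1,v_2,u_1,\dots,u_k$ is a separate check deferred to each application of the lemma rather than part of this proof.
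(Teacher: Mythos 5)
Your proof is correct and follows the same route as the paper's, which compresses the entire argument into the single contrapositive observation that otherwise $v_1u_j$ and $v_2u_i$ with $i<j$ would both be present and cross; your version merely makes the resulting index bookkeeping ($\max A\le\min B$ plus interval counting) explicit. One remark worth keeping: the suffix you derive for $v_2$, namely $\{u_{m_1},\dots,u_k\}$ of size $k+1-m_1$, is the correct mirror of the prefix for $v_1$ --- the range $u_{k+1-m_1},\dots,u_k$ printed in the lemma has only $m_1$ elements and appears to carry a typo in its starting index, and your bound is the one actually needed in the determined case $m_1+m_2=k+1$.
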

    \begin{proof}
        Otherwise, $v_1$ connects to some $u_j$ and $v_2$ connects to some $u_i$ with $i<j$, which causes a crossing.
    \end{proof}

    We often encounter $k=2,3$ when invoking Lemma~\ref{lem:triangle-apex}, so
    we describe these small and specific instances separately in the following
    corollary.
    
    \begin{corollary}[Non-Crossing Subgraphs of $K_{2,2}$ and $K_{2,3}$]
    \label{cor:polygon}
        Let $G$ be a PSL graph with vertex set $v_1,v_2,u_1,u_2$ (respectively $v_1,v_2,u_1,u_2,u_3$), such that the segments between $v_i$ and~$u_j$ cross if and only if they cross in Figure~\ref{fig:small_polygon} (a), respectively (b).
        Suppose that $G$ has at least three (respectively four) edges of the illustrated edges.
        Then the figure illustrates all possible cases for the edges of $G$.
    \end{corollary}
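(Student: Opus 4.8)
The plan is to derive this directly from the Triangle Apex Argument (Lemma~\ref{lem:triangle-apex}), using planarity only to pin down the exact number of edges. First I would observe that the prescribed crossing pattern is precisely the hypothesis of Lemma~\ref{lem:triangle-apex}: in the $K_{2,2}$ case ($k=2$) the only crossing pair is $v_1u_2$ with $v_2u_1$, and in the $K_{2,3}$ case ($k=3$) the crossing pairs are exactly those $v_1u_j$, $v_2u_i$ with $i<j$. Since $G$ is a PSL graph its edges are interior-disjoint, so at most one segment from each crossing pair can be an edge of $G$.

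Next I would use this to count edges. In the $K_{2,2}$ case the segments $v_1u_1$ and $v_2u_2$ cross nothing, while $v_1u_2$ and $v_2u_1$ cross each other; hence at most three of the four illustrated segments can simultaneously be edges. Combined with the hypothesis that $G$ has at least three of them, this forces exactly three edges. Writing $m_1$ for the number of vertices $u_i$ adjacent to $v_1$ and $m_2$ for the number adjacent to $v_2$, we obtain $m_1+m_2=3=k+1$. The analogous count in the $K_{2,3}$ case shows that the two conflict-free segments $v_1u_1$ and $v_2u_3$ may always appear, while the remaining four segments $v_1u_2,v_1u_3,v_2u_1,v_2u_2$ admit no pairwise non-crossing subset of size larger than two; so at most four segments are edges, and the hypothesis forces exactly four, giving $m_1+m_2=4=k+1$.

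With $m_1+m_2=k+1$ established, I would invoke the ``completely determined'' clause of Lemma~\ref{lem:triangle-apex}: once $m_1$ is fixed, the edge set must be $\{v_1u_1,\dots,v_1u_{m_1}\}\cup\{v_2u_{m_1},\dots,v_2u_k\}$, the two stars sharing only the vertex $u_{m_1}$. It then remains to enumerate the admissible pairs $(m_1,m_2)$ subject to $m_1+m_2=k+1$ and $0\le m_1,m_2\le k$. For $k=2$ these are $(2,1)$ and $(1,2)$, yielding the two configurations of Figure~\ref{fig:small_polygon}~(a); for $k=3$ they are $(3,1)$, $(2,2)$, and $(1,3)$, yielding the three configurations of Figure~\ref{fig:small_polygon}~(b). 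Each admissible pair produces exactly one of the illustrated edge sets, so the figure captures all cases.

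The step requiring genuine care, rather than a mechanical appeal to the lemma, is the edge count: one must verify that the crossing graph on the illustrated segments has maximum non-crossing subset of size exactly $k+1$, so that the hypothesis ``at least $k+1$ edges'' upgrades to ``exactly $k+1$ edges.'' This equality is what makes $m_1+m_2=k+1$ hold and lets the determination clause of Lemma~\ref{lem:triangle-apex} apply; the rest is a direct specialization to $k=2$ and $k=3$.
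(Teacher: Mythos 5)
Your proposal is correct and follows exactly the route the paper intends: the corollary is stated without proof as a direct specialization of the Triangle Apex Argument (Lemma~\ref{lem:triangle-apex}) to $k=2,3$, which is precisely what you carry out. The one step the paper leaves entirely implicit --- verifying that the maximum non-crossing subset of the illustrated segments has size exactly $k+1$, so that the hypothesis forces $m_1+m_2=k+1$ and the ``completely determined'' clause applies --- is supplied correctly in your argument.
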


    \begin{figure}[h]
        \centering
        \includegraphics{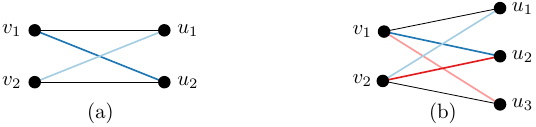}
        \caption{Given vertex sets as shown, we see all possible cases for edge-maximal non-crossing subgraphs of $K_{2,2}$ and $K_{2,3}$, respectively. Namely, in (a), we may choose either light or dark blue. In (b), we may choose dark blue and dark red, dark blue and dark red, or light blue and light red. In either case, we always must include the black edges.}
        \label{fig:small_polygon}
    \end{figure}
    
    \begin{lemma}[Almost-Complete Fan Argument]
    \label{lem:fan}
        Let $G$ be a PSL graph on the vertices $v_1$, $v_2$, $u_1,u_2,\dots,u_k$, such that $v_1$ is adjacent to four out of $v_2,u_1, u_2, \ldots, u_k$, and $v_2$ is adjacent to $k-1$ (i.e., all but one) of $u_1, \ldots, u_k$.
        Suppose that segment $v_1u_2$ crosses $v_2u_1$, and $v_1u_{k-1}$ crosses $v_2u_k$.
        Suppose also that each segment $v_1u_i$ with $3 \leq i \leq k-2$ crosses either both $v_2u_1$ and $v_2u_2$ or both $v_2u_k$ and $v_2u_{k-1}$.
        Then in $G$ (as depicted in Figure~\ref{fig:fan-lemma}):
        \begin{enumerate}
            \item $v_1$ is adjacent to $v_2$, $u_1$, $u_k$, and either (a) $u_2$ or (b) $u_{k-1}$,
            \item $v_2$ is adjacent to $u_i$ for all $2 \leq i \leq k-1$, and in case (a) $u_k$ or in case (b) $u_1$.
        \end{enumerate}
        \begin{figure}[h]
            \centering
            \includegraphics{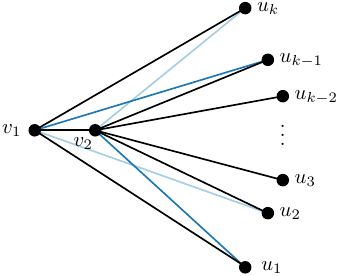}
            \caption{Illustration of the framework for Lemma~\ref{lem:fan}. Cases (a) and (b) are indicated by light and dark blue, respectively.}
            \label{fig:fan-lemma}
        \end{figure}
    \end{lemma}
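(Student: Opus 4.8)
The plan is to reduce the entire statement to a short counting argument that uses only the stated crossing hypotheses together with the two degree constraints, so that no further appeal to the geometry of the figure is needed. The single structural fact I will invoke repeatedly is that a PSL graph has no crossing edges: whenever two segments are assumed to cross, they cannot both be edges of $G$.

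First I would eliminate every ``middle'' edge at $v_1$. Fix an index $i$ with $3 \le i \le k-2$ and suppose for contradiction that $v_1u_i$ is an edge of $G$. By hypothesis $v_1u_i$ crosses either both of $v_2u_1, v_2u_2$ or both of $v_2u_k, v_2u_{k-1}$; since $G$ is noncrossing, in the first case neither $v_2u_1$ nor $v_2u_2$ is an edge, and in the second case neither $v_2u_k$ nor $v_2u_{k-1}$ is an edge. Either way $v_2$ would fail to connect to two of $u_1,\dots,u_k$, contradicting that $v_2$ is adjacent to all but one of them. Hence $v_1$ has no neighbor among $u_3,\dots,u_{k-2}$, so its four neighbors all lie in the five-element set $\{v_2,u_1,u_2,u_{k-1},u_k\}$, and $v_1$ therefore omits exactly one element of this set.

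Next I would pin down which element is omitted. Observe that $v_1$ cannot be adjacent to both $u_2$ and $u_{k-1}$: if it were, then $v_1u_2$ crossing $v_2u_1$ would force $v_2u_1$ to be a non-edge, and $v_1u_{k-1}$ crossing $v_2u_k$ would force $v_2u_k$ to be a non-edge, so $v_2$ would miss both $u_1$ and $u_k$, again a contradiction. Thus $v_1$ omits at least one of $u_2,u_{k-1}$; since it omits exactly one element of $\{v_2,u_1,u_2,u_{k-1},u_k\}$, the omitted element is precisely one of $u_2,u_{k-1}$, and consequently $v_1$ is adjacent to $v_2$, $u_1$, $u_k$, and the remaining one of $u_2,u_{k-1}$. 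This is the first conclusion of the lemma, with case (a) the subcase in which $v_1u_2$ is an edge and case (b) the subcase in which $v_1u_{k-1}$ is an edge.

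Finally I would read off the second conclusion from whichever case holds. In case (a), $v_1u_2$ is an edge crossing $v_2u_1$, so $v_2u_1$ is a non-edge; as $v_2$ misses exactly one of the $u_i$, it must be adjacent to $u_2,\dots,u_k$, which is exactly the asserted adjacency ($u_i$ for $2\le i\le k-1$ together with $u_k$). Case (b) is symmetric, using that $v_1u_{k-1}$ crosses $v_2u_k$ to force $v_2u_k$ to be a non-edge and hence adjacency of $v_2$ to $u_1,\dots,u_{k-1}$. I do not expect a serious obstacle here: the whole argument is purely combinatorial once the noncrossing property is invoked. The only care required is bookkeeping — the hypothesis ``$v_2$ misses exactly one $u_i$'' is used three separate times, each to rule out a configuration that would force $v_2$ to miss two of them — and a check of the small-$k$ boundary (for instance $k=4$, where the middle range $3\le i\le k-2$ is empty and the argument degenerates gracefully).
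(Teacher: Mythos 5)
Your proposal is correct and follows essentially the same route as the paper's proof: rule out the middle neighbors $u_3,\dots,u_{k-2}$ and the simultaneous adjacency to $u_2$ and $u_{k-1}$ by observing that either would force $v_2$ to miss two of the $u_i$, then deduce conclusion~1 by counting, and finally read off conclusion~2 from whichever of the two cases holds. The only difference is presentational — you separate into three explicit steps what the paper compresses into one sentence — and your bookkeeping is sound.
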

    
    \begin{proof}
        If $v_1$ is adjacent to any $u_i$ with $3\leq i\leq k-2$, or $v_1$ is simultaneously adjacent to both $u_2$ and $u_{k-1}$, then $v_2$ can be adjacent to at most $k-2$ of $u_1,\dots,u_k$, as edges would cross otherwise.
        Therefore $v_1$ is adjacent to four of $v_2$, $u_1$, $u_2$, $u_{k-1}$, and $u_k$, and not both of~$u_2$ and $u_{k-1}$, so \enumit{1.} follows.
        If $v_1$ is adjacent to (a) $u_2$, then $v_2$ cannot be adjacent to $u_1$, so it must be adjacent to each of $u_2,\dots,u_k$.
        If instead $v_1$ is adjacent to (b) $u_k$, then $v_2$ cannot be adjacent to $u_k$, so it must be adjacent to each of $u_1,\dots,u_{k-1}$.
        So also \enumit{2.}~follows.
    \end{proof}

\subsection{Correctness of Gadgets}\label{sec:gadget-correctness}


    \begin{figure}
        \centering
        \includegraphics[page=1]{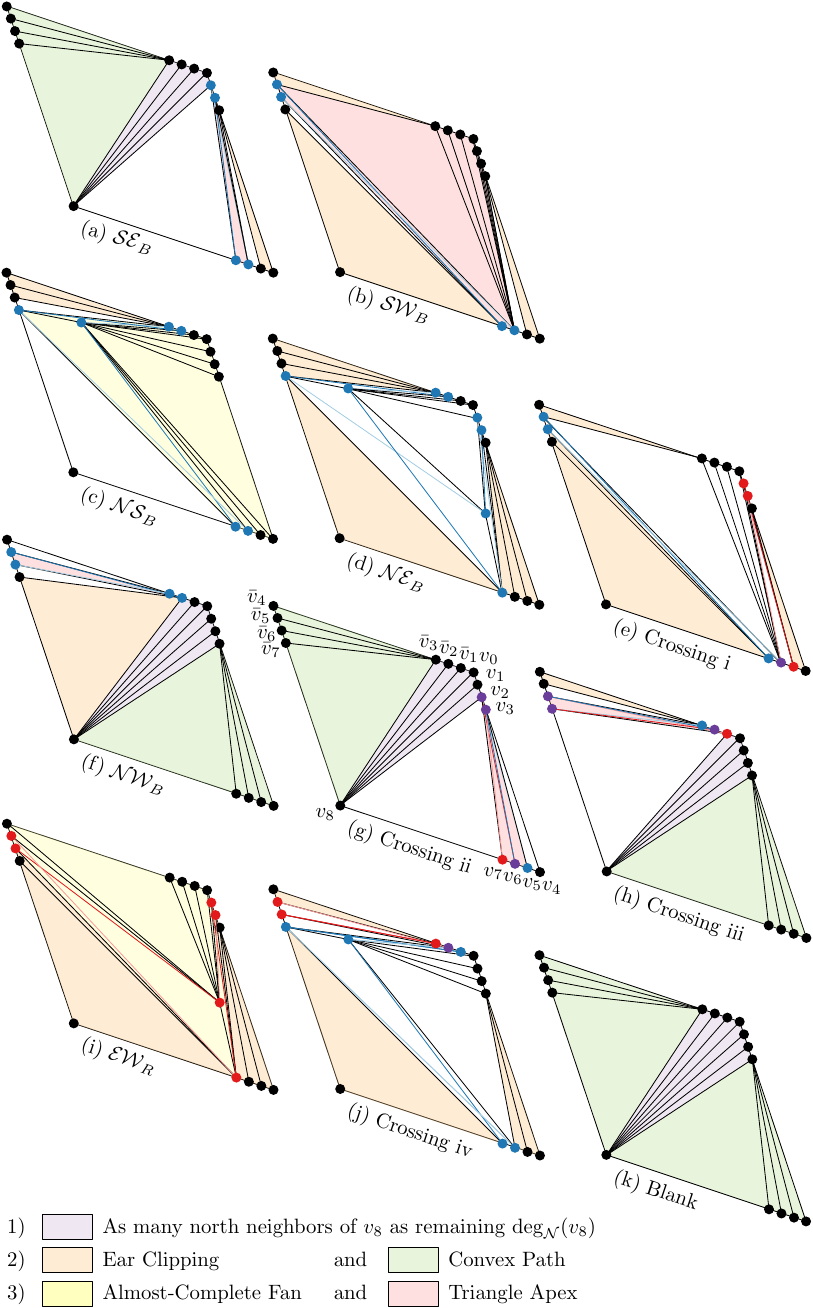}
        \caption{Visual guides for the proofs of Lemmas~\ref{lem:simple}--\ref{lem:crossingiv}. 
        Blue and red vertices are incident to
        blue and red edges, respectively, and purple vertices are incident to
        both blue and red edges. 
        The argument that forces an interior edge is indicated by the color of an incident triangle.
        Interior edges that are not incident to any colored triangle are handled separately.
        }
        \label{fig:sheep_chop}
    \end{figure}
    
    The proofs of correctness for many of our gadgets are nearly
    identical.
    Therefore, in the following lemma, we elect to provide the proof for one representative in
    greater detail, and trust the reader to extrapolate
    to others. Recall that triangulations of gadgets contain all edges of the frame cell boundary. In the correctness proofs of our gadgets, we therefore assume the boundary to be given, and for ease of exposition, we do not include boundary edges in the calculations of (cardinal) degrees. 
    We begin with the gadget Crossing~ii, as it illustrates a number of key ideas used in other cells.

    \begin{lemma}[Correctness of Crossing~ii]
        \label{lem:simple}
        The gadget Crossing~ii,
        given in
        Figure~\ref{fig:grid_gadgets}~(g) is correct.
    \end{lemma}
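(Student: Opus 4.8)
The plan is to verify both inclusions required by Definition~\ref{def:correct_gadget}: that every intended triangulation of the Crossing~ii gadget is valid, and conversely that every valid triangulation is one of its intended triangulations. The first inclusion is immediate from the definitions: for any local coloring $\chi$ for which $U_\dG^\chi$ is non-crossing, every vertex of $U_\dG^\chi$ matches $\chi$ in every cardinal direction by construction, so Conditions~\ref{cond:valid_vertex}--\ref{cond:valid_south} of Definition~\ref{def:valid_triangulation} all hold with the single coloring $\chi$. Since Crossing~ii is critical, $U_\dG^{\chi[\lR\lB]}$ is the unique crossing coloring, so the three intended triangulations are $U_\dG^{\chi[\lR\dB]}$, $U_\dG^{\chi[\dR\lB]}$, and $U_\dG^{\chi[\dR\dB]}$; these are the targets of the converse inclusion.

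For the converse, I would start from an arbitrary valid triangulation $T$ and first pin down the edges common to all three intended triangulations (the black edges). Using the Convex Path Argument (Lemma~\ref{lem:convexPath}) along the convex boundary chains of the frame cell, together with repeated applications of Ear Clipping (Lemma~\ref{lem:ear}) around the interior vertices $p$ and $q$, I expect to force all of these black edges directly, independently of the coloring. The positional guarantees of Construction~\ref{cons:framecell} (that $p$ lies south and west of the segment $\vb_3\vb_7$ and $q$ south and west of $v_3v_7$, together with the prescribed $x$- and $y$-orderings) are exactly what make the convexity hypotheses of these lemmas hold, and Figure~\ref{fig:sheep_chop} indicates which triangle forces each such edge.

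With the black skeleton fixed, the remaining freedom lies in the interior edges between the red and the blue wire. Here I would use all three validity conditions: Condition~\ref{cond:valid_vertex} fixes the (coloring-invariant) north and east degrees of the vertices off the south and west boundaries, while Conditions~\ref{cond:valid_west} and~\ref{cond:valid_south} force the west-boundary vertices to share a common coloring $\chi_\dW$ and the south-boundary vertices a common coloring $\chi_\dS$ in their constrained directions, which fixes the shades of both wires entering the cell and hence the cardinal degrees of the wire vertices. Feeding these degrees into the Triangle Apex Argument (Lemma~\ref{lem:triangle-apex}) and the $K_{2,3}$ case of Corollary~\ref{cor:polygon} at the crossing — where the red and blue wire segments compete transversally for the same apex vertices — should determine all incident edges uniquely once the two shades are chosen, matching precisely $U_\dG^{\chi}$ with $\chi=\{\text{black}\}\cup\{\text{chosen red shade}\}\cup\{\text{chosen blue shade}\}$. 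Finally, because $T$ is a genuine PSL triangulation it has no crossing edges, so the light-red/light-blue combination cannot occur; this eliminates $\chi[\lR\lB]$ and leaves exactly the three intended triangulations.

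The main obstacle I anticipate is the crossing region itself: I must rule out any ``rogue'' non-crossing triangulation that meets the prescribed degree counts yet routes an interior edge differently from every $U_\dG^\chi$. The heart of this is the transversal-crossing hypothesis of Lemma~\ref{lem:triangle-apex}: at the critical crossing, choosing the inner (light) routing for both wires simultaneously produces a crossing, so at most one wire may take its inner route in a valid triangulation. Carefully assigning which vertices play the roles of $v_1,v_2$ and $u_1,\dots,u_k$ in Lemmas~\ref{lem:triangle-apex} and~\ref{lem:fan}, and checking that the induced crossing pattern matches their hypotheses, is the delicate step; once it is established, the remaining gadgets should follow by the \emph{nearly identical} arguments promised in the surrounding text.
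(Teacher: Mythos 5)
Your plan matches the paper's proof in essence: force the black skeleton first, then resolve the crossing pentagon ($v_2v_3v_5v_6v_7$) with the $K_{2,3}$ case of Corollary~\ref{cor:polygon}, which yields exactly the three non-crossing shade combinations and hence the three intended triangulations. One small correction: the Crossing~ii gadget has no interior vertices $p$ or $q$, so the Ear Clipping step you propose around them does not apply; the paper instead forces the black fan by observing that all four directions are constrained for $v_8$, whose total degree is six in every local coloring, which together with convexity of the boundary chains pins down all six of its neighbors as $\vb_3,\vb_2,\vb_1,v_0,v_1,v_2$, after which the Convex Path Argument completes the skeleton.
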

    
    \begin{proof}
        The proof steps are illustrated in Figure~\ref{fig:sheep_chop}(g). Let $T$ be a valid triangulation of the frame cell Crossing~ii. From Definition~\ref{def:valid_triangulation} we get certain conditions on vertices in $T$ matching local colorings that must be met. We use these conditions to reconstruct all valid triangulations, calling an edge \emph{forced} if it must be present in any valid triangulation $T$.

        First, observe that for $v_8$ to match a local coloring we must have $\deg(v_8) = 6$ (note that all cardinal directions are constrained for $v_8$), and for $v_3$ to match a local coloring we must have $\deg_\dW(v_3) = 0$ (note that the west direction is constrained for $v_3$). So in particular, $v_8$ cannot be adjacent to $v_3$. Convexity of the boundaries also implies that, within this frame cell, $v_8$ cannot be adjacent to any vertices on the south and west boundaries. Thus, the only way for $v_8$ to have degree six is to be adjacent to $\vb_3, \vb_2, \vb_1, v_0, v_1, v_2$, giving us the six corresponding forced edges (illustrated by the gray shaded region).

        From here we see that vertices $\vb_3, \vb_4, \vb_5, \vb_6, \vb_7,$ and
        $v_8$ form a cycle of forced edges, and for $4 \leq i < j \leq 8$, the
        segment $v_iv_j$ lies outside of that cycle. Then by
        Lemma~\ref{lem:convexPath} (the ``convex path argument''), we see that – for $T$ to be a triangulation –
        $\vb_3$ must be adjacent to~$\vb_5, \vb_6,$ and $\vb_7$ (illustrated by the green shaded region).

        Next, we take care of the pentagon $v_2v_3v_5v_6v_7$, which lies in the correct position for Corollary~\ref{cor:polygon}(b). In any local coloring, vertices $v_5, v_6, v_7$ have combined (north) degree four, and since edge $v_2v_8$ is forced by our first argument they must be adjacent to $v_2$ and/or $v_3$. I.e., four edges of the ones illustrated in Figure~\ref{fig:small_polygon} must be present in $T$. Then, by Corollary~\ref{cor:polygon}, edges $v_2v_7$ and $v_3v_5$ are forced, and $T$ contains either exactly the dark red and dark blue, the dark red and light blue, or the light red and dark blue edge, which are the three intended triangulations. Then by Definition~\ref{def:correct_gadget}, Crossing~ii as pictured in Figure~\ref{fig:sheep_chop}~(g) is correct.
    \end{proof}
    
    The correctness proofs for the gadgets $\dS\dE$, $\dS\dW$, $\dN\dS$, $\dN\dW$, Crossing~iii, $\dE\dW$, and the Blank gadget follow a
    similar structure, and Figure~\ref{fig:sheep_chop} contains a visual key for the logical steps we take, where such steps are taken in reading order as indicated by the key.
    Note that proofs of correctness for monotone tiles are symmetric for blue or red, so we omit subscripts $B$ or $R$ in proofs, and only illustrate one of the two scenarios in subfigures of Figure~\ref{fig:sheep_chop}.
    
    \begin{lemma}[Correctness of Remaining Simple Gadgets]
        \label{lem:simple}
        The gadgets $\dS\dE$, $\dS\dW$, $\dN\dS$, $\dN\dW$,
        Crossing~iii, $\dE\dW$, and the Blank gadget given in
        Figures~\ref{fig:sheep_chop}(a), (b), (c), (f), (h), (i), and (k),
        respectively, are correct.
    \end{lemma}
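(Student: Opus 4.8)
The plan is to prove correctness of each gadget by establishing the two inclusions of Definition~\ref{def:correct_gadget} separately. The easy inclusion is that every intended triangulation is valid: for a local coloring $\chi$ admitted by the gadget, the graph $U_\dG^\chi$ is by construction a triangulation in which every vertex realizes exactly the cardinal degree recorded by $\chi$, so taking $\chi_\dW=\chi_\dS=\chi$ immediately satisfies Conditions~\ref{cond:valid_vertex}--\ref{cond:valid_south} of Definition~\ref{def:valid_triangulation}. The substantive direction is the converse: that any valid triangulation $T$ of $\dG$ coincides with some $U_\dG^\chi$, i.e.\ that no \emph{unintended} triangulation can meet all the matching constraints.

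For the converse I would process each gadget using exactly the template of the Crossing~ii correctness proof above, following the logical steps recorded (in reading order) by the visual key of Figure~\ref{fig:sheep_chop}. The first step in each case is to exploit the fully-constrained corner $v_8$: since all four directions are constrained for $v_8$, matching a local coloring pins down $\deg(v_8)$ exactly, and since convexity of the frame-cell boundaries forbids $v_8$ from reaching the south and west boundary vertices, the only way to realize that degree is a specific fan of forced edges from $v_8$ into the north/east side. This fan produces a convex cycle of forced edges, after which Lemma~\ref{lem:convexPath} forces the diagonals emanating from its extreme vertex, and Lemma~\ref{lem:ear} clips any remaining ear whose apex has no free incident edge crossing the clipped diagonal.

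Once these forced edges are in place, each gadget reduces to a single small sub-polygon whose admissible triangulations form a non-crossing subgraph of $K_{2,2}$ or $K_{2,3}$. Here I would invoke Corollary~\ref{cor:polygon}: the constrained degrees of the relevant interior and boundary vertices guarantee that at least three (respectively four) of the illustrated edges are present, so the corollary enumerates precisely the admissible edge sets, and these coincide with the intended triangulations $R_\dG$ up to the choice of red/blue shade. For the monotone gadgets $\dS\dE$, $\dS\dW$, $\dN\dS$, $\dN\dW$, $\dE\dW$, and the Blank gadget, the configuration is symmetric in the two colors, so it suffices to verify the single representative drawn in Figure~\ref{fig:sheep_chop} and swap colors for its counterpart; for the critical gadget Crossing~iii, the coloring $\chi[\lR\lB]$ produces crossing edges and is therefore excluded, leaving exactly the three intended triangulations, mirroring the Crossing~ii case.

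The main obstacle will be bookkeeping rather than any single conceptual difficulty: for each gadget one must confirm that the degree prescribed by $\sigma$ (black edges plus half the incident colored edges) forces exactly the intended initial fan at $v_8$, and that after all forcing the leftover region is a clean instance of Corollary~\ref{cor:polygon} whose surviving colorings equal $R_\dG$ and not a larger set. The delicate cases are those gadgets whose leftover polygon is not immediately a $K_{2,2}$ or $K_{2,3}$ in the required crossing pattern; there I would fall back on the triangle apex argument (Lemma~\ref{lem:triangle-apex}) or the almost-complete fan argument (Lemma~\ref{lem:fan}) to pin down the remaining edges before applying the corollary, exactly as the tools assembled in the preceding subsection anticipate.
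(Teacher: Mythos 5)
Your proposal takes essentially the same approach as the paper: the paper gives no standalone proof for these gadgets, but instead declares that they follow the template of the Crossing~ii argument, with the order of forcing steps (corner degree, convex path, ear clipping, triangle apex, almost-complete fan, and Corollary~\ref{cor:polygon}) recorded only in the visual key of Figure~\ref{fig:sheep_chop} and with the red/blue symmetry invoked for the monotone gadgets, exactly as you describe. The only caveat is that your opening move is not uniform across gadgets --- in several of them $v_8$ has degree zero in every local coloring, so the first forced edge comes from an ear clip rather than a fan --- but this is the kind of per-gadget bookkeeping you already acknowledge.
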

    
    We present the correctness of gadgets that require individualized arguments separately.

    \begin{restatable}[Correctness of $\dN\dE$]{lemma}{NE}
    \label{lem:NE}
             The $\dN\dE$ gadget given
             Figure~\ref{fig:sheep_chop}(d) is correct.
    \end{restatable}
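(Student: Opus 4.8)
The plan is to mirror the argument for Crossing~ii (Lemma~\ref{lem:simple}), forcing as many edges as possible from the constrained-direction degree data guaranteed by a valid triangulation, and then to resolve the one genuinely free region of the $\dN\dE$ cell with the Almost-Complete Fan Argument (Lemma~\ref{lem:fan}). Let $T$ be a valid triangulation of the $\dN\dE$ frame cell; as usual I assume the boundary edges are present and suppress them in all degree counts. One direction of correctness is routine: each intended triangulation $U_\dG^\chi$ has every vertex matching $\chi$, hence is valid. The work is the converse---showing that validity forces $T$ to be one of the intended triangulations.

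First I would extract forced edges from the heavily constrained corner vertex $v_8$, whose four directions are all constrained: validity fixes its total degree, and convexity of the south and west boundaries (Construction~\ref{cons:framegraph}) forbids adjacencies along them, so $v_8$ must fan out to the north/east vertices in a unique way, just as in Crossing~ii. This creates a cycle of forced edges whose ``far'' vertex sees all the others across convex position, so the Convex Path Argument (Lemma~\ref{lem:convexPath}) forces the edges from that apex. Wherever the remaining undetermined edges sit inside a quadrilateral or pentagon with the crossing pattern of Figure~\ref{fig:small_polygon}, I would apply Corollary~\ref{cor:polygon} to force its diagonals, and where a small empty triangle appears I would use the Ear Clipping Argument (Lemma~\ref{lem:ear}). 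Each such step consumes only degrees that are constrained, so every edge it produces lies in every valid triangulation.

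The individualized step---and the main obstacle---is the turn region near the north-east corner $v_0$, where the two shades of the single cycle through this tile actually diverge and the standard arguments stop short. Here I would instantiate Lemma~\ref{lem:fan} with $v_1,v_2$ the two ``apex'' vertices of the turn and $u_1,\dots,u_k$ its fan of targets. The hypotheses ``$v_1$ adjacent to four of them, $v_2$ adjacent to all but one'' should come directly from the matching local colorings supplied by Lemmas~\ref{lem:interiornorthandeast}--\ref{lem:south}: the constrained south/west vertices already commit their colored incidences, leaving $v_2$ one short and $v_1$ with exactly four choices. Lemma~\ref{lem:fan} then returns precisely its two outcomes (a) and (b), and I would check that these are exactly the light-shade and dark-shade realizations $U_\dG^\chi$ of the cycle, giving the two intended triangulations.

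The delicate part is verifying the geometric crossing hypotheses of Lemma~\ref{lem:fan}: reading off from the $x$- and $y$-orderings of Construction~\ref{cons:framecell}---including the placement of the interior vertices $p$ and $q$ relative to the segments $\vb_3\vb_7$ and $v_3v_7$---I must confirm that $v_1u_2$ crosses $v_2u_1$, that $v_1u_{k-1}$ crosses $v_2u_k$, and that each intermediate segment $v_1u_i$ crosses one of the two extreme pairs at $v_2$. Once these crossings are checked, the combinatorics is immediate: the forced edges together with the two fan outcomes exhaust the valid triangulations and match the intended ones, so by Definition~\ref{def:correct_gadget} the $\dN\dE$ gadget is correct.
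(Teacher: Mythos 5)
Your high-level instinct---force edges from constrained degrees, then resolve the genuinely free region with the Almost-Complete Fan Argument---matches the paper's strategy, but the concrete opening of your argument is wrong for this gadget. You assume $v_8$ behaves as in Crossing~ii, fanning out to six vertices and seeding a Convex Path Argument. In the $\dN\dE$ gadget the wire turns the corner near $v_0$, far from the south-west corner, so $v_8$ has interior degree $0$ in every local coloring. The paper's first step is therefore the opposite of yours: because $v_8$ is isolated and the triangle $\vb_7 v_8 v_7$ is empty, Ear Clipping forces $\vb_7 v_7$, and repeated Ear Clipping (using the constrained south degrees of $\vb_4,\vb_5,\vb_6$ and west degrees of $v_4,v_5,v_6$) then forces the edges from $\vb_3$ to $\vb_5,\vb_6,\vb_7$ and from $v_3$ to $v_5,v_6,v_7$. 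The Convex Path Argument is not used here at all. You also omit the step that pins down the interior vertices: the unit west degree of $p$ forces $p\vb_7$ and the unit south degree of $q$ forces $qv_7$, which is what makes $\vb_7$ and $p$ the correct apices for the fan (over $\vb_3,\vb_2,\vb_1,v_0,v_1,q,v_7$), together with the observation that $\deg_\dN(v_2)=\deg_\dN(v_3)=0$ excludes adjacencies to $p$.

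The more serious gap is after the fan. The fan outcome does not by itself exhaust the cell: the pentagon $qv_7v_3v_2v_1$ remains, and the paper handles it with the Triangle Apex Argument, forcing $v_1q$ and $v_2v_7$ and leaving a second independent binary choice ($v_1v_7$ versus $v_2q$). Your proposal stops at ``the two fan outcomes exhaust the valid triangulations,'' which would implicitly count $2\times 2$ blue configurations rather than $2$. The missing idea is the consistency step: the choice in the pentagon must agree in shade with the choice made in the fan, because otherwise $q$ and $v_7$ would fail to match any single local coloring in their constrained directions. Without that argument the gadget would admit mixed-shade triangulations outside $R_\dG$ and would not be correct.
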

    \begin{proof} 
        Let $T$ be a valid triangulation of $\dN\dE$. Vertex $v_8$ has degree 0 in all local colorings, and no vertices lie inside the triangle $\vb_7v_8v_7$. Thus, edge $\vb_7v_7$ is forced by the Ear Clipping argument (Lemma~\ref{lem:ear}).
        Similarly, edges from $\vb_3$ to $\vb_5, \vb_6,\vb_7$ are forced, and edges from $v_3$ to~$v_5,v_6,v_7$ are forced via Ear Clipping
        arguments made in that order. For the former three edges we use that the south degree is constrained for vertices $\vb_4, \vb_5$ and $\vb_6$, and for the latter three edges we use that the west degree is constrained for vertices $v_4, v_5$ and $v_6$.
        
        The west degree of $p$ is one in all local colorings, and the edge to $\vb_7$ is the only possibility
        that does not introduce an edge crossing, meaning $p\vb_7$ is a forced edge.
        Similarly, because of the south degree of $q$, edge $qv_7$ is forced.
        
        We have $\deg_\dN(v_2) = 0$ and $\deg_\dN(v_3) = 0$ in all local colorings, and the north direction is constrained for $v_2$ and $v_3$. In particular, neither $v_2$ nor $v_3$ can be incident to $p$.
        
        This allows us to use the Almost-Complete Fan argument (Lemma~\ref{lem:fan}) for $\vb_7,p$ (as $v_1,v_2$) and vertices $\vb_3, \vb_2, \vb_1, v_0, v_1, q, v_7$ (as $u_1,\dots,u_k$), yielding exactly the intended triangulations as pictured in the red area of Figure~\ref{fig:sheep_chop}(d). It remains to look at the pentagon $qv_7v_3v_2v_1$. By the Triangle Apex Argument (Lemma~\ref{lem:triangle-apex}) applied to $q$ and $v_7$ (as $v_1,v_2$) and $v_1,v_2,v_3$ (as $u_1,u_2,u_3$) the edges $v_1q$ and $v_2v_7$ are also forced, leaving a choice of either $v_1v_7$ or $v_2q$ (light blue or dark blue), which needs to be consistent with the other choices of blue edges, for otherwise $T$ would not match a local coloring in vertices $q$ and $v_7$. It follows that $T$ is one of the intended triangulations. 
    \end{proof}
    
    \begin{restatable}[Correctness of Crossing~i]{lemma}{crossingi}
    \label{lem:crossingi}
        Gadget Crossing~i given in Figure~\ref{fig:sheep_chop}(e) is correct.
    \end{restatable}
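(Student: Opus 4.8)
The plan is to fix an arbitrary valid triangulation $T$ of the Crossing~i gadget and to force its edges until only the choices that distinguish the local colorings remain. Since Crossing~i is not a critical tile, the coloring $\chi[\lR\lB]$ is crossing-free, so $R_\dG$ consists of \emph{all four} graphs $U_\dG^\chi$; the goal is to show that the valid triangulations are exactly these four. As in the proof of Lemma~\ref{lem:NE}, I would proceed in the reading order of the visual key in Figure~\ref{fig:sheep_chop}(e): first triangulate the boundary strips, then pin down the interior vertices $p$ and $q$, and only afterward analyze the central region through which both wires pass.

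First I would handle the west and south strips. Using that the south direction is constrained for the lower west-boundary vertices and the west direction for the left south-boundary vertices, together with the prescribed corner degrees, repeated applications of the Ear Clipping argument (Lemma~\ref{lem:ear}) force the fan of edges from $\vb_3$ to $\vb_5,\vb_6,\vb_7$ and, symmetrically, from $v_3$ to $v_5,v_6,v_7$ (a Convex Path argument, Lemma~\ref{lem:convexPath}, gives an equivalent route). Next, each interior vertex has a prescribed cardinal degree of one toward its adjacent strip, with a unique crossing-free way to realize it, so the edges $p\vb_7$ and $qv_7$ are forced. This isolates the central polygon bounded by the north and east boundaries together with $p$, $q$, and the already-forced edges.

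The heart of the proof is this central region. Because the north direction is constrained for the north-boundary vertices and the east direction for the east-boundary vertices, the degrees of the relevant vertices (in particular those adjacent to $p$ and $q$) are pinned to their local-coloring values, which excludes the edges that would otherwise let $p$ or $q$ reach across the cell. With these ruled out, I would apply the Almost-Complete Fan argument (Lemma~\ref{lem:fan}) to the pair involving $p$ and $\vb_7$ against the chain $\vb_3,\vb_2,\vb_1,v_0,v_1,q,v_7$, forcing the red-side edges up to a single light/dark choice, and the Triangle Apex argument (Lemma~\ref{lem:triangle-apex}) together with Corollary~\ref{cor:polygon} on the remaining small pentagon involving $q$ and $v_7$ to force the blue-side edges up to a single light/dark choice. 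Matching a local coloring in every constrained direction (Definition~\ref{def:valid_triangulation}) then forces each wire to be consistently light or dark along its length, so that $T$ agrees with some $U_\dG^\chi$; correctness follows by Definition~\ref{def:correct_gadget}.

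The hard part will be the central region, and specifically showing that the two color choices are genuinely \emph{independent}. Unlike the critical Crossing~ii, where the $\chi[\lR\lB]$ routing forces an edge crossing and thereby eliminates one combination (leaving only three intended triangulations), here I must verify both that both wires can be routed light simultaneously without any crossing, so that all four combinations survive, and conversely that no valid triangulation mixes a light portion of a wire with a dark portion of the \emph{same} wire. The constrained-direction degree conditions of Definition~\ref{def:valid_triangulation} are exactly what rule out such a mismatch: requiring the forced fan and apex edges to be consistent with the prescribed cardinal degrees at every vertex of each wire leaves precisely the two binary choices, matching $R_\dG$.
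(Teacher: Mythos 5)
There is a genuine gap: your proof is structured around a gadget that is not the Crossing~i gadget. You assume Crossing~i contains the two interior vertices $p$ and $q$, force the edges $p\vb_7$ and $qv_7$, and then run the Almost-Complete Fan argument on the pair $p,\vb_7$ against the chain $\vb_3,\vb_2,\vb_1,v_0,v_1,q,v_7$ --- this is a transplant of the proof of Lemma~\ref{lem:NE} for the $\dN\dE$ gadget. Crossing~i has no interior vertices, so every step of your ``central region'' analysis refers to vertices and edges that do not exist, and the preliminary strip-triangulation you describe (full fans from $\vb_3$ to $\vb_5,\vb_6,\vb_7$ and from $v_3$ to $v_5,v_6,v_7$) is also not what happens here: the paper forces only $\vb_7v_7$, $\vb_6v_7$, $\vb_3\vb_5$, and $v_3v_5$ by Ear Clipping, precisely because $\vb_6$ must reach across to $v_7$ and $v_6$ rather than back to $\vb_3$.

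The missing idea is that in this gadget the vertex $v_6$ acts as a hub with $\deg_\dN(v_6)\geq 6$ in every local coloring. The paper combines this with the constrained south degrees $\deg_\dS(\vb_3)=\deg_\dS(\vb_2)=1$ and $\deg_\dS(\vb_5)\geq 1$ to exclude all neighbors of $\vb_5$ except $v_6$ and $v_7$, then applies Corollary~\ref{cor:polygon}(b) to force $\vb_5v_6$ and reduce the red wire to the binary choice $\vb_5v_7$ versus $\vb_6v_6$; the remaining east degrees of $\vb_5,\vb_3,\vb_2,\vb_1$ force the spokes $\vb_3v_6,\vb_2v_6,\vb_1v_6,v_0v_6$, and the Triangle Apex argument inside the cycle $v_5v_6v_0v_1v_2v_3$ reduces the blue wire to the choice $v_1v_5$ versus $v_2v_6$. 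Your high-level framing is fine --- you correctly note that, since Crossing~i is not critical, all four local colorings must survive, and that the goal is to reduce to two independent binary choices --- but without the hub-degree and neighbor-exclusion arguments the forcing chain cannot be carried out, so the proof as written does not go through.
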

    \begin{proof}
        Let $T$ be a valid triangulation of Crossing~i. Edge $\vb_7v_7$ is
        forced by the same reasoning as in the proof of Lemma~\ref{lem:NE}. We can then make a similar argument to show edges $\vb_6v_7$,
        $\vb_3\vb_5$, and $v_3v_5$ are forced by the Ear Clipping argument (Lemma~\ref{lem:ear}), using the east degree of $\vb_6$, the south degree of $\vb_4$, and the west degree of $v_4$, respectively (note that the corresponding directions are each constrained for the respective vertex).

        Note that the south direction is constrained for $\vb_5, \vb_3,$ and $\vb_2$, and the north direction is constrained for $v_6$. In all local colorings, we have $\deg_{\dS}(\vb_5) \geq 1$, $\deg_{\dS}(\vb_3) = 1$, $\deg_{\dS}(\vb_2) = 1$, and $\deg_{\dN}(v_6) \geq 6$, thus $T$ must satisfy these conditions. In particular, $\vb_5$ cannot be adjacent to $\vb_2$ or $\vb_1$ for $\vb_3$ to have sufficient south degree, it cannot be adjacent to $\vb_0$ for $\vb_2$ to have sufficient south degree, and it cannot be adjacent to $v_1, v_2, v_3$, and $v_5$ for $v_6$ to have sufficient north degree. I.e., $\vb_5$ can only be adjacent to $v_6$ and $v_7$ in $T$, and at least one of the two edges exists, implying that $\vb_6$ can also only be adjacent to $v_6$ and $v_7$ in $T$. In all colorings, vertices $\vb_5$ and $\vb_6$ have remaining combined degree 2, thus there must be two additional edges between $\vb_5, \vb_6$, and $v_6, v_7$. It follows from Corollary~\ref{cor:polygon}(b) that the edge $\vb_5v_6$ is forced, and exactly one of the two edges $\vb_5v_7$ and $\vb_6v_6$ is in $T$.

        Using remaining east degrees of vertices $\vb_5, \vb_3, \vb_2, \vb_1$ we get that the edges $\vb_3v_6$, $\vb_2v_6$, $\vb_1v_6$, and $v_0v_6$. It remains to investigate what $T$ can look like inside the forced convex cycle $v_5, v_6, v_0, v_1, v_2, v_3$. In all local colorings, vertices $v_5$ and $v_6$ have remaining north degrees at least 1. I.e., together with the forced edge $v_0v_6$ vertex $v_6$ is adjacent to at least two out of the four vertices $v_0, v_1, v_2, v_3$, and together with the forced edge $v_3v_5$ vertex $v_5$ is adjacent to at least two out of the four vertices $v_0, v_1, v_2, v_3$. Thus, by the Triangle Apex argument (Lemma~\ref{lem:triangle-apex}), $v_1v_6$ and $v_2v_5$ are also forced, leaving the remaining edge to be either $v_1v_5$ or $v_2v_6$. We have thereby shown that for $T$ to be valid, it has to be one of the intended triangulations.
    \end{proof}
    
    \begin{restatable}[Correctness of Crossing~iv]{lemma}{crossingiv}
    \label{lem:crossingiv}
        Gadget Crossing~iv given in Figure~\ref{fig:sheep_chop}(j) is correct.
    \end{restatable}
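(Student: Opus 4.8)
The plan is to follow the template of Lemma~\ref{lem:crossingi} (Crossing~i), since Crossing~iv is the other non-critical crossing and should therefore admit all four local colorings as intended triangulations. Let $T$ be a valid triangulation of Crossing~iv. I would first force the short-cut edge $\vb_7 v_7$ by the same degree-$0$ Ear Clipping argument (Lemma~\ref{lem:ear}) used for $v_8$ in the proof of Lemma~\ref{lem:NE}, and then force the further boundary-adjacent edges (such as $\vb_6 v_7$, $\vb_3\vb_5$, and $v_3 v_5$, or the analogues dictated by the precise Crossing~iv layout of Figure~\ref{fig:sheep_chop}(j)) by successive Ear Clipping steps, each time using that the relevant east, south, or west degree is constrained for the respective boundary vertex and so is pinned to its value across all local colorings.

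Next I would exploit the local-coloring degree constraints to rule out the ``long'' adjacencies and force a block of edges, exactly as in the proof of Lemma~\ref{lem:crossingi}. Concretely, I would use that certain boundary vertices have a fixed south/east degree (equal to $1$, or bounded below) in every local coloring, together with a constrained north degree that is large in every coloring, to show that a pair of boundary vertices can only connect into one narrow band of candidates. This isolates a convex forced cycle and a pentagon positioned for Corollary~\ref{cor:polygon}(b); applying that corollary forces the two ``diagonal'' edges of the pentagon and reduces the first color to a single binary (light versus dark) choice.

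I would then apply the Triangle Apex argument (Lemma~\ref{lem:triangle-apex}) to the remaining convex region, using that two vertices each connect to at least two of a short list of candidates whose $v_1$--$v_j$ and $v_2$--$v_i$ segments cross, to force all but one further edge and thereby reduce the second color to its own binary choice. Each choice must be internally consistent so that the interior and boundary vertices all match one local coloring. Because Crossing~iv is non-critical, $U_\dG^{\chi[\lR\lB]}$ has no crossing, so all four combinations of the red and blue choices are genuine non-crossing triangulations; these are precisely the four intended triangulations, and correctness follows from Definition~\ref{def:correct_gadget}.

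The main obstacle I anticipate is the bookkeeping needed to certify that the two binary choices are genuinely \emph{independent} and, crucially, that \emph{nothing} is forced that would exclude the light--light coloring. In the critical gadgets (Crossing~ii, iii) the proof is allowed to eliminate $\chi[\lR\lB]$, but here it must survive; so I would have to check, from the explicit Crossing~iv picture, that the forced black edges separate the red region from the blue region cleanly, so that Corollary~\ref{cor:polygon} and Lemma~\ref{lem:triangle-apex} can be invoked on each color in isolation and exactly four (rather than three) valid triangulations remain.
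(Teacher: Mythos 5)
There is a genuine gap: your plan is modeled on Crossing~i, but Crossing~iv is structurally different, most importantly because it contains an interior vertex $p$, and the heart of the paper's argument is pinning down the neighborhood of $p$. In every local coloring $\deg_\dE(p)=7$, so $p$ must be adjacent to seven of the nine vertices $\vb_3,\vb_2,\vb_1,v_0,v_1,v_2,v_3,v_6,v_7$; the paper then proves these seven neighbors must be \emph{contiguous} (by counting interior edges of the polygon cut off by $p$ against the remaining combined east degree $5$ of $\vb_5,\vb_6,\vb_7$), which leaves only three possible windows, and rules one of them out by a separate contradiction involving $p$'s west degree and the quadrilateral $\vb_3\vb_5\vb_6\vb_7$. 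Only then can the Almost-Complete Fan Argument (Lemma~\ref{lem:fan}) be applied with $\vb_7$ and $p$ as the two apexes, and the proof finishes with Corollary~\ref{cor:polygon}(a) (not (b)) on $\vb_2,\vb_3,\vb_5,\vb_6$. Your proposal never mentions $p$, and the pentagon-plus-Triangle-Apex machinery you import from Crossing~i has no analogue here; without controlling $p$'s high east degree you cannot isolate the convex regions on which your color-by-color arguments would run.

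Your instinct that the light--light coloring must survive (since Crossing~iv is non-critical) is correct and worth stating, but the route you sketch would not reach that conclusion: the independence of the red and blue binary choices in this gadget is a consequence of the fan decomposition around $p$ and the final $K_{2,2}$ step, not of a clean black-edge separation between a red pentagon and a blue pentagon as in Crossing~i.
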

    \begin{proof}
        Let $T$ be a valid triangulation of Crossing~iv. By Ear Clipping arguments (Lemma~\ref{lem:ear}) we get forced edges $v_3v_5$, $v_3v_5$, $\vb_7v_7$, and $\vb_3\vb_5$ in a similar manner as in previous proofs. 
        
        In all local colorings we have $\deg_\dE(p) = 7$, and the vertices $\vb_5, \vb_6, \vb_7$ have combined remaining east degree 5. The former implies that triangulation $T$ must satisfy that $p$ is adjacent to seven out of the nine vertices $\vb_3, \vb_2, \vb_1, v_0, v_1, v_2, v_3, v_6, v_7$. We claim that these seven vertices must be contiguous. Let $a$ be the first among those vertices (in the order they are listed) that $p$ is adjacent to, and $b$ the last. The remaining 5 edges adjacent from the east to $\vb_5, \vb_6, \vb_7$ must then be inner edges in the polygon $\vb_7, \vb_6, \vb_5,\vb_3, \dots, a,p, b, \dots,v_7$. A polygon with $n$ vertices can have at most $n-3$ non-crossing inner edges. Conversely, the polygon described above must have at least 8 vertices. This implies the correctness of our claim, implying in turn that $(a,b)$ is either $(\vb_3,v_3)$, $(\vb_2, v_6)$, or $(\vb_1, v_7)$. We will rule out the case $(\vb_3, v_3)$. So for the sake of contradiction, assume that $(a,b) = (\vb_3,v_3)$. Since $p$ has west degree one in all local colorings, it has to be adjacent to some vertex $u \in \{\vb_5, \vb_6, \vb_7\}$. But since $p$ has no remaining degree, and since the triangle $pua$ contains no other vertices in all cases, we will also have the edge $ua$ by the Ear Clipping argument (Lemma~\ref{lem:ear}), and similarly the edge $ub$. Since vertices $\vb_5$ and $\vb_6$ each have remaining degree at most two in all local colorings, $u$ must be $\vb_7$. However, edge $\vb_7\vb_3$ allows only for one additional edge in the quadrilateral $\vb_3\vb_5\vb_6\vb_7$, which is a contradiction to $\vb_5$ and $\vb_6$ having remaining total degree 3 in all local colorings. We can now apply the Almost-Complete Fan Argument (Lemma~\ref{lem:fan}) to vertices $\vb_7$ and $p$ as $v_1$ and $v_2$ and vertices $\vb_2, \vb_1, v_0, v_1, v_2, v_3, v_6, v_7$ as $u_1,\dots,u_k$. This gives the forced edge $\vb_2\vb_7$ and inside the forced polygon $\vb_7\vb_2\vb_1v_0v_1v_2v_3v_6v_7$ exactly the intended triangulations. The remaining part follows from the application of Corollary~\ref{cor:polygon}(a) to the vertices $\vb_2, \vb_3, \vb_5, \vb_6$.
    \end{proof}

    By combining the bijection in Theorem~\ref{thm:tile-selection-bijection} with Lemma~\ref{lem:induction} and the correctness of all gadgets, we conclude that the number of cardinal signature realizations of~$\sigma$ is the same as the number of noncrossing tile selections of $X$.
    This establishes Theorem~\ref{thm:sharpPhard}.

    \begin{theorem}\label{thm:sharpPhard}
        \sharpCardSig is \#P-hard.
    \end{theorem}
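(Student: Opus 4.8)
The plan is to assemble the chain of reductions and bijections established in the preceding sections into a single parsimonious, polynomial-time reduction from \sharpVC to \sharpCardSig. The starting point is that \sharpVC is \#P-complete, and that its number of solutions coincides with that of \sharpIS, since the complement of a vertex cover is an independent set. It therefore suffices to give a polynomial-time parsimonious reduction from \sharpIS to \sharpCardSig; composing it with the reduction from \sharpIS to \sharpTiles and the identification of \sharpTiles with counting noncrossing tile selections (Lemma~\ref{lem:counting-noncrossing-cycle-sets}) then yields the claim.

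First I would take an arbitrary instance $G$ of \sharpIS, pass through the reduction to \sharpTiles to obtain a tiling $X$ whose intersection graph is $G$, and invoke Lemma~\ref{lem:counting-noncrossing-cycle-sets} to conclude that the number of independent sets of $G$ equals the number of noncrossing tile selections of $X$. Next I would apply the construction of Section~\ref{sec:tilings-to-cardinal-signatures} to build the cardinal signature $\sigma$ associated with $X$, and check that this construction runs in polynomial time: the frame graph has $O(wh)$ vertices (the subdivided grid together with the at most two interior vertices per cell and the single cone vertex), and $w,h$ are polynomially bounded because the cycles of $X$ lie on a polynomial-sized grid.

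The heart of the argument is to show that the number of realizations of $\sigma$ equals the number of noncrossing tile selections of $X$. Theorem~\ref{thm:tile-selection-bijection} already supplies a bijection between noncrossing tile selections and the subclass $\mathcal{R}$ of realizations that use an intended realization in every frame cell. To upgrade this to a full count, I would invoke Lemma~\ref{lem:induction}, which---given the correctness of every gadget, established in Section~\ref{sec:gadget-correctness}---guarantees that $\mathcal{R}$ is in fact the set of \emph{all} realizations of $\sigma$. Combining the two facts, the map $(S_R,S_B)\mapsto F(\varphi_{S_R,S_B})$ is a bijection between noncrossing tile selections of $X$ and the full set of realizations of $\sigma$, so the two counts agree exactly.

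Chaining the equalities---number of vertex covers equals number of independent sets equals number of noncrossing tile selections equals number of realizations of $\sigma$---together with the polynomial bound on the construction shows that counting realizations of $\sigma$ solves \sharpVC. Since every step is a genuine bijection rather than a merely count-preserving map, the reduction is parsimonious, and \sharpCardSig is \#P-hard. I expect the only real difficulty to lie not in this final assembly but in the prerequisites it leans on: Lemma~\ref{lem:induction} rests on the north-eastward induction over frame cells together with the case-by-case gadget correctness proofs, and it is there that the geometric rigidity of the construction (forced frame edges, and the distinction between constrained and unconstrained cardinal directions) must be carefully exploited.
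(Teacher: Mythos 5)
Your proposal is correct and follows essentially the same route as the paper: it combines the bijection of Theorem~\ref{thm:tile-selection-bijection} with Lemma~\ref{lem:induction} and the gadget-correctness lemmas to equate realizations of $\sigma$ with noncrossing tile selections, then chains this through Lemma~\ref{lem:counting-noncrossing-cycle-sets} and the \sharpIS/\sharpVC equivalence to obtain a parsimonious polynomial-time reduction. The explicit attention to the polynomial size of the construction is a welcome addition, but the argument is otherwise the paper's own assembly of its preceding results.
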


    \section{Discussion}
    We showed that \sharpCardSig, or counting the number of triangulations satisfying given degree information in each of the four cardinal directions is \#P-hard, via a reduction from \sharpVC through the auxiliary problem of \sharpTiles. Because we can shear the graphs used in this reduction so each vertex has neighbors only in antipodal wedges of arbitrarily small angle, our work has a straightforward extension to the problem of counting PSL triangulations when we know degree information in any $d$ directions, with $d \geq 4$.
    In ongoing work, we are interested in the hardness of determining whether or not a PSL triangulation that satisfies a given cardinal degree signature exists.

    In terms of our original motivation from topological data analysis, namely studying the inverse problems for directional transforms, we note that this leaves a great deal of open questions and room for future progress.  We have focused on triangulations in $\mathbb{R}^2$, as well as data given from the directional transform with Euler characteristic or persistence data, but other transforms such as the radial or affine Grassmannian have not been as well studied~\cite{Onus2024,Chambers2025}. One could also study different topological signatures, such as mapper graphs or merge trees, and frame inverse questions for low dimensional structures.
    
    \bibliography{references.bib}

\end{document}